\documentclass[12pt]{article}
\topmargin=0cm
\textheight=22cm
\textwidth=16cm
\oddsidemargin=0cm
\evensidemargin=0cm
\usepackage{graphicx}
\usepackage{wrapfig}
\usepackage{amsthm,color}
\makeatletter

\@addtoreset{equation}{section}
\makeatother
\usepackage{amsfonts}
\usepackage{eucal}
\usepackage{amsmath}
\usepackage{amssymb}
\usepackage{bm}
\usepackage{bbm}

\def\hsymbu#1{\smash{\lower1.7ex\hbox{\huge$#1$}}}
\newtheorem{theorem}{\bf Theorem}[section]
{\bf Definition}

\newtheorem{lemma}[theorem]{\bf Lemma}
\newtheorem{proposition}[theorem]{\bf Proposition}
\newtheorem{remark}
{\bf Remark}[section]
\newtheorem{example}
{\bf Example}[section]
\newtheorem{assumption}{\bf Assumption}

\newcommand{\Ran}{{\rm ran}}




\begin{document}
\title{\bf 
	Localization of a multi-dimensional \\
	quantum walk with one defect
}
\author{Toru Fuda
	\footnote{
		Department of Mathematics, Hokkaido University, 
		Sapporo 060-0810, Japan, \newline 
		E-mail: t-fuda@math.sci.hokudai.ac.jp
	}, 
	Daiju Funakawa
	\footnote{
		Department of Mathematics, Hokkaido University, 
		Sapporo 060-0810, Japan, \newline
		E-mail: funakawa@math.sci.hokudai.ac.jp
	}
	and 
	Akito Suzuki
	\footnote{
		Division of Mathematics and Physics, Faculty of Engineering, 
		Shinshu University, Wakasato, Nagano 380-8553,  Japan, E-mail:
		akito@shinshu-u.ac.jp
	}
}
\date{}

\maketitle

\begin{abstract}
In this paper,
we introduce a multidimensional generalization of 
Kitagawa's split-step discrete-time quantum walk,
study the spectrum of its evolution operator 
for the case of one defect coins,
and prove localization of the walk. 
Using a spectral mapping theorem,
we can reduce the spectral analysis of the evolution operator 
to that of a discrete Schr\"{o}dinger operator with variable coefficients,
which is analyzed using the Feshbach map. 
\end{abstract}

\medskip

\begin{flushleft}
%
{\bf Keywords:} quantum walks, localization, eigenvalues,
Feshbach map

\end{flushleft}

\section{Introduction}
Quantum walks (QWs) have been introduced
and studied in various contexts 
such as quantum probability \cite{Gud}, quantum optics \cite{Ah}, 
quantum cellular automata, \cite{Gro, Mey}, 
and quantum information \cite{Am01, CFG}
(see \cite{Am03, Kem03, Ko08, VA15} for more details). 
Among them, 
motivated by Grover's quantum search algorithms\cite{Gr, SKW,FGB}, 
researchers have proposed several types of discrete time QWs on graphs 
\cite{Wa, AhAm, Ken, IKK04, MBSS, TFMK}.  
Szegedy \cite{Sz} introduced a bipartite walk, 
which is defined on a bipartite graph, 
to construct a quantum search algorithm. 
Magniez et al \cite{MNRoS07, MNRiS07} 
updated the notion of bipartite walks 
and Segawa \cite{Se13} redefined an evolution operator $U_G$
on the Hilbert space $\ell^2(D)$ of square summable functions on 
the set $D$ of arcs for a digraphs $G= (V, D)$. 
The QW defined by $U_G$ 
is now referred to as the Szgedy walk on $G$,
which includes the Grover walk on $G$ as a special case. 
The Szegedy walks have a spectral mapping property 
from the transition probability matrix
$P_G$ 
of a random walk on $G$
to the evolution $U_G$,
which gives a useful tool for analyzing the spectrum of $U_G$
(see \cite{Se13, MS15, HS15} for more details). 
An extended version of the Szegedy walk, the twisted Szegedy walk,
was introduced by Higuchi et al \cite{HKSS14}
to study the spectral and asymptotic properties of the Grover walks 
on crystal lattices.
Higuchi, Segawa, and one of the authors of this paper \cite{HSS16}
proved the spectral mapping theorem (SMT)
for more general 
evolution $U = SC$,
where $S$ and $C$ are unitary and self-adjoint 
on a Hilbert space $\mathcal{H}$  
and where $C$ is assumed to be of the form 
\[ C = 2d^* d -1 \] 
with a coisometry  $d$ from $\mathcal{H}$ to 
a Hilbert space $\mathcal{K}$,
i.e., $dd^*$ is the identity $I_{\mathcal{K}}$ on $\mathcal{K}$
. 
Observe that the Hilbert space $\mathcal{H}$ here can be taken
to be arbitrary and is no longer needed to be $\ell^2(D)$. 
Let $T= d S d^*$. 
$T$ is a self-adjoint operator on $\mathcal{K}$
and called the discriminant operator of $U$. 
Let $\mathcal{D}_\pm =\ker d \ \cap \ker (S \pm 1)$.
The subspace 
$\mathcal{D}_{\rm B} 
= \mathcal{D}_+ \oplus \mathcal{D}_- \subset \mathcal{H} $
is called the {\it birth eigenspace} of $U$ and its orthogonal complement 
$\mathcal{D}_{\rm I}$ the {\it  inherited subspace} of $U$
(see \cite{HS15,MOS16}). 
As shown elsewhere \cite{SS16}, 
the restriction $U_{\rm I} :=U|_{\mathcal{D}_{\rm I}}$ 
to the inherited subspace is unitarily equivalent to 
\[ \exp(+i\arccos T)\oplus \exp(-i\arccos T) \]
and the restriction $U_{\rm B} :=U|_{\mathcal{D}_{\rm B}}$
to the birth eigenspace is 
$I_{\mathcal{D}_+} \oplus (-I_{\mathcal{D}_-})$. 
Thus, the spectral analysis of $U$ is reduced to two parts:
(1) the spectral analysis of $T$ and
(2) the calculation of $\dim \mathcal{D}_{\rm B}$.  
This reduction leads the SMT from $T$ to $U$
(Theorem \ref{thm_SM}), 
which allows us to use it
for QWs other than the Szegedy walk.
As evident below, 
such an abstract theorem is applicable 
for a class of $d$-dimensional QWs,
which is not the Szegedy walk on $\mathbb{Z}^d$.  
In forthcoming papers \cite{FFSwlt, FFSloc}, 
we will consider a unified model
that includes a split-step QW 
introduced by Kitagawa et al \cite{KRBD}
and traditional one-dimensional QWs 
\cite{Am01,Gud,Mey} as special cases. 
The evolution of the walk
is a unitary operator on 
$\ell^2(\mathbb{Z};\mathbb{C}^2)$ defined as
$U = S_1 C$, 
where 
\[ (S_1\psi)(x) = \begin{pmatrix} 
	p \psi_1(x) + q \psi_2(x+1) \\ 
	q^* \psi_1(x-1) - p \psi_2(x) \end{pmatrix},
		\quad x \in \mathbb{Z},
		\; \psi \in \ell^2(\mathbb{Z};\mathbb{C}^2). 
\]
Taking $(p,q) \in \mathbb{R} \times \mathbb{C}$
as $p^2 + |q|^2=1$ ensures $S_1$ is unitary and self-adjoint.
$C$ is a multiplication operator by unitary matrices $C(x) \in U(2)$. 
If $C(x)$ is in addition hermitian and 
$\dim \ker(C(x)-1) = 1$ for all $x \in \mathbb{Z}$,
then $C$ is written as $2d^* d -1$
with a coisometry 
$d:\ell^2(\mathbb{Z};\mathbb{C}^2) \to \ell^2(\mathbb{Z})$
(see Example \ref{ex_SS}).
Thus the SMT is applicable. 

\paragraph{\em Models}
In this paper, 
we consider a multi-dimensional generalization of 
the aforementioned model,
which is a $2d$-state QW on $\mathbb{Z}^d$ 
with a position dependent coin $C({\bm x}) \in U(2d)$ and $d \geq 2$. 
However, for conceptual and notational simplicity,
we first concentrate on the case of $d=2$. 
The case of $d \geq 3$
is dealt with in the subsequent sections. 
Let $\mathcal{H} = \ell^2(\mathbb{Z}^2 ; \mathbb{C}^4)$
be the Hilbert space of states.  
As usual, the evolution operator $U=SC$ is defined as
the product of a shift $S$ and a coin $C$. 
To define the shift operator, we introduce a set
\[ D = \{ ({\bm p},{\bm q}) = (p_1,p_2,q_1,q_2) 
	\in \mathbb{R}^2 \times \mathbb{C}^2 
			\colon  \mbox{$p_j^2+|q_j|^2=1$ ($j=1,2$)} \} \]	
and use $\{\bm{e}_j\}_{j=1}^2$ 
to denote the standard basis of $\mathbb{Z}^2$.
We define operators $S_j$ ($j=1,2$) on 
$\ell^2(\mathbb{Z}^2 ; \mathbb{C}^2)$ as 
\[
	(S_j\Psi)(\bm{x})=
	\begin{pmatrix}
		p_j\psi_1(\bm{x})+q_j\psi_2(\bm{x}+\bm{e}_j) \\
		q_j^{\ast}\psi_1(\bm{x}-\bm{e}_j)-p_j\psi_2(\bm{x})
	\end{pmatrix},
	\quad
	\bm{x}\in \mathbb{Z}^2,
\]
for 	all $\Psi= {}^t (\psi_1, \psi_2)
	\in \ell^2(\mathbb{Z}^2 ; \mathbb{C}^2)$.
The shift $S$ on $\mathcal{H}$ is defined as
a diagonal operator 
$
	S= S_1 \oplus  S_2
$
on 
$\mathcal{H} \simeq
\oplus^2 \ell^2(\mathbb{Z}^2 ; \mathbb{C}^2)$
.
The condition $({\bm p},{\bm q}) \in D$ ensures 
that $S_j$ is self-adjoint and unitary on 
$\ell^2(\mathbb{Z}^2 ; \mathbb{C}^2)$
and so is $S$ on $\mathcal{H}$. 
The coin operator is a multiplication by
unitary and self-adjoint square matrices $C(\bm{x}) \in U(4)$. 
In general, a unitary and self-adjoint operator
is an involution; hence, it can only have eigenvalues $\pm 1$ 
as its spectrum. 
We 
impose the following on the coin operator $C$.  
\begin{itemize}
	\item ({\bf Simplicity})\quad 
$\dim \ker(C(\bm{x})-1)=1,
		\quad \bm{x}\in\mathbb{Z}^2$. 
	\item ({\bf One defect})\quad 
		$
		C(\bm{x})
		=
		\begin{cases}
			C_1, & {\bm x}\in \mathbb{Z}^2\setminus\{\bm{0}\} \\
			C_0, & {\bm x}=\bm{0}
		\end{cases}
	$ 
	with some $C_0, C_1 \in U(4)$.  
\end{itemize}

We here comment on the aforementioned conditions. 
The simplicity condition means that 
$C({\bm x})$ is a Grover-type coin. 
Indeed, by $\dim \ker(C(\bm{x})-1)=1$,
we can take a unique normalized eigenvector 
$
\chi(\bm{x})
\in \ker(C(\bm{x})-1)
$ up to a constant factor.
As seen in Lemma \ref{lem_d}, 
we can write  $C = 2d^* d- 1$
with a coisometry 
$d: \mathcal{H} \to \mathcal{K}:= \ell^2(\mathbb{Z}^2)$ 
defined as
\[ (d \Psi)({\bm x}) 
= \langle \chi({\bm x}), \Psi({\bm x}) \rangle_{\mathbb{C}^2},
\quad {\bm x} \in \mathbb{Z}^2
\quad \mbox{for all $\Psi \in \mathcal{H}$.}
\] 
The one defect condition means that
$\chi(\bm{x})$ can be written as 
\begin{align*}
		\chi(\bm{x})
		&=
		\begin{cases}
			\Phi =
				{}^t(\Phi_1,\Phi_2) \ 
			\text{with}\ 
			\Phi_j
			{\in\mathbb{C}^2}
			\ (j=1,2),
			& 
			\bm{x}\in\mathbb{Z}^2 \setminus\{\bm{0}\}, \\
			\Omega =
				{}^t (\Omega_1,\Omega_2)\
			\text{with}\
			\Omega_j 
			{\in\mathbb{C}^2}
			\ (j=1,2),
			& 
			\bm{x}=\bm{0},
		\end{cases}
\end{align*}
where $\Phi \in \ker (C_1-1)$ and $\Omega \in \ker (C_0-1)$
are normalized vectors. 
In Grover's search algorithm on a graph $G =(V,E)$,
the coin operator $C({\bm x})$ differs only 
at a vertex ${\bm x} = {\bm x}_0$,
which is a unique solution to the search problem
. 
This is a one-defect condition. 
Moreover, finding the marked vertex ${\bm x}_0$
with non-zero probability 
is closely related to localization of the corresponding QW. 
Motivated by Grover's search algorithm, 
we study localization of the one defect model on $\mathbb{Z}^d$. 


\paragraph{\em Results}
Let $\Psi_0 \in \mathcal{H}$ be the initial state of a quantum walker,
and let $\Psi_t = U^t \Psi_0$ ($t =1,2,\ldots$) be
the state of the walker at time $t$. 
The position $X_t$ of the walker at time $t$
follows 
$P(X_t={\bm x}) = \|\Psi_t({\bm x})\|^2_{\mathbb{C}^2}$
(${\bm x} \in \mathbb{Z}^2$). 
As shown in \cite{SS16}, 
if the initial state $\Psi_0$ has a overlap with an eigenvector of $U$,
then localization occurs, i.e.,
\[ \limsup_{t \to \infty} P(X_t={\bm x}) > 0
	\ \mbox{with some ${\bm x} \in \mathbb{Z}^2$}. 
\]
Thus the problem is reduced to proving 
the existence of eigenvalues for $U$. 

We are now in a position to state our result.  
Let
$\sigma_1 = \begin{pmatrix} 0 & 1 \\ 1 & 0 \end{pmatrix}$,
$\sigma_+ = \begin{pmatrix} 0 & 1 \\ 0 & 0 \end{pmatrix}$,
and $\sigma_3 = \begin{pmatrix} 1 & 0 \\ 0 & -1 \end{pmatrix}$. 
We set
$\displaystyle 
a_\Omega({\bm p})
	= \sum_{j=1}^2 p_j 
	\langle \Omega_j, \sigma_3 \Omega_j \rangle_{\mathbb{C}^2}$,
$\displaystyle 
a_\Phi({\bm p})
	= \sum_{j=1}^2 p_j 
	\langle \Phi_j, \sigma_3 \Phi_j \rangle_{\mathbb{C}^2}$,
$\displaystyle 
\lambda({\bm q}) 
	= 2\sum_{j=1}^2 \left|q_j \langle \Phi_j, \sigma_+ \Phi_j 
		\rangle_{\mathbb{C}^2}\right|$,   
and
$D_j = \{ ({\bm p}, {\bm q}) \in D \colon p_jq_j \not=0 \}$
($j=1,2$).  
Let 
$\mathbb{T}_- = [-1, -\lambda({\bm q})+a_{\Phi}({\bm p}))$,
$\mathbb{T}_+ = (\lambda({\bm q})+a_{\Phi}({\bm p}), 1]$,
and $g_\pm(\lambda)=e^{\pm i\arccos \lambda}$.  
{We use \, $\cdot$ \, to denote the scalar product. }

\begin{wrapfigure}{r}{75mm}
\unitlength 0.1in
\begin{picture}( 29.0500, 25.5000)(  2.6000,-26.2000)
%
\special{pn 4}%
\special{ar 1836 1400 1000 1000  0.0000000 6.2831853}%
%
\special{pn 8}%
\special{pa 356 1400}%
\special{pa 3166 1400}%
\special{fp}%
\special{sh 1}%
\special{pa 3166 1400}%
\special{pa 3098 1380}%
\special{pa 3112 1400}%
\special{pa 3098 1420}%
\special{pa 3166 1400}%
\special{fp}%
%
\special{pn 8}%
\special{pa 1836 2620}%
\special{pa 1836 70}%
\special{fp}%
\special{sh 1}%
\special{pa 1836 70}%
\special{pa 1816 138}%
\special{pa 1836 124}%
\special{pa 1856 138}%
\special{pa 1836 70}%
\special{fp}%
%
\special{pn 20}%
\special{ar 1836 1400 1000 1000  4.3128735 5.8151518}%
%
\special{pn 20}%
\special{ar 1836 1410 1000 1000  0.4680335 1.9703118}%
%
\special{pn 8}%
\special{pa 1426 490}%
\special{pa 1426 2330}%
\special{dt 0.045}%
%
\special{pn 8}%
\special{pa 2736 970}%
\special{pa 2736 1840}%
\special{dt 0.045}%
%
\special{pn 8}%
\special{pa 2086 1470}%
\special{pa 2086 1330}%
\special{fp}%
%
\special{pn 20}%
\special{sh 1}%
\special{ar 1016 1970 10 10 0  6.28318530717959E+0000}%
%
\special{pn 8}%
\special{pa 1016 840}%
\special{pa 1016 1970}%
\special{dt 0.045}%
\put(29.2500,-14.9000){\makebox(0,0)[lt]{1}}%
\put(5.8500,-14.9000){\makebox(0,0)[lt]{$-1$}}%
\put(20.9000,-2.8000){\makebox(0,0){{\footnotesize $\sigma(U_{\rm I})$}}}%
\put(20.8500,-16.1000){\makebox(0,0){{\footnotesize $a_{\Phi}({\bm p})$}}}%
\put(9.1000,-12.9000){\makebox(0,0)[lb]{{\footnotesize $-\lambda({\bm q})+a_{\Phi}({\bm p})$}}}%
\put(27.4000,-12.3000){\makebox(0,0){{\footnotesize $\lambda({\bm q})+a_{\Phi}({\bm p})$}}}%
%
\special{pn 8}%
\special{pa 2736 1490}%
\special{pa 2736 1350}%
\special{fp}%
%
\special{pn 8}%
\special{pa 1426 1490}%
\special{pa 1426 1350}%
\special{fp}%
\put(2.6000,-4.4000){\makebox(0,0)[lt]{{\footnotesize Eigenvalue of $U_{\rm I}$}}}%
\put(2.6000,-22.1000){\makebox(0,0)[lt]{{\footnotesize Eigenvalue of $U_{\rm I}$}}}%
%
\special{pn 4}%
\special{pa 940 610}%
\special{pa 1000 770}%
\special{fp}%
\special{sh 1}%
\special{pa 1000 770}%
\special{pa 996 702}%
\special{pa 982 720}%
\special{pa 958 716}%
\special{pa 1000 770}%
\special{fp}%
%
\special{pn 4}%
\special{pa 900 2160}%
\special{pa 990 2020}%
\special{fp}%
\special{sh 1}%
\special{pa 990 2020}%
\special{pa 938 2066}%
\special{pa 962 2066}%
\special{pa 972 2088}%
\special{pa 990 2020}%
\special{fp}%
%
\special{pn 13}%
\special{pa 1430 1410}%
\special{pa 2740 1410}%
\special{fp}%
%
\special{pn 20}%
\special{sh 1}%
\special{ar 1020 1400 10 10 0  6.28318530717959E+0000}%
\special{sh 1}%
\special{ar 1020 1400 10 10 0  6.28318530717959E+0000}%
\put(20.9000,-11.3000){\makebox(0,0){{\footnotesize $\sigma(T)$}}}%
%
\special{pn 8}%
\special{pa 2090 940}%
\special{pa 2090 630}%
\special{fp}%
\special{sh 1}%
\special{pa 2090 630}%
\special{pa 2070 698}%
\special{pa 2090 684}%
\special{pa 2110 698}%
\special{pa 2090 630}%
\special{fp}%
%
\special{pn 8}%
\special{pa 2090 1840}%
\special{pa 2090 2150}%
\special{fp}%
\special{sh 1}%
\special{pa 2090 2150}%
\special{pa 2110 2084}%
\special{pa 2090 2098}%
\special{pa 2070 2084}%
\special{pa 2090 2150}%
\special{fp}%
\put(21.5000,-7.9000){\makebox(0,0)[lb]{{\footnotesize $g_+$}}}%
\put(21.6000,-19.8000){\makebox(0,0)[lb]{{\footnotesize $g_-$}}}%
%
\special{pn 20}%
\special{sh 1}%
\special{ar 1020 830 10 10 0  6.28318530717959E+0000}%
\end{picture}%
  \vspace{-3mm}
  \caption{Location of the spectrum $\sigma(U_{\rm I})$
  for $a_\Omega({\bm p}_0) < a_{\Phi}({\bm p}_0)$. 
  $g_\pm (\lambda) = e^{\pm i\arccos \lambda}$ 
  map
  $\sigma(T) = [-\lambda({\bm q}) + a_{\Phi}({\bm p}),
  \lambda({\bm q}) + a_{\Phi}({\bm p})]
  $  onto $\sigma(U_{\rm I}) \subset S^1$.
  The difference $\sigma(U) \setminus \sigma(U_{\rm I})$
  is at most $\sigma(U_{\rm B}) \subset \{-1,+1\}$. 
  See Theorem \ref{MThm} for more details.}
  \label{fig01}
\end{wrapfigure}
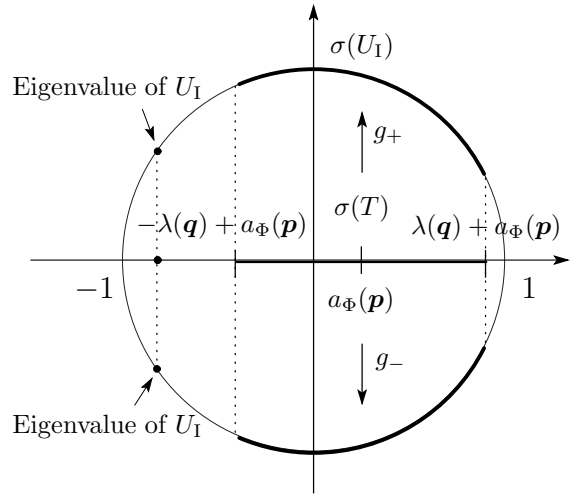

\begin{theorem}
\label{thm_d=2}
Let $U=SC$ as above. Suppose that the following conditions hold. 
\begin{itemize}
\item[(1)] 
{$\Phi_j \cdot (\sigma_1 \Omega_j) = 0$}
for all $j \in \{1,2\}$
and
{$\langle \Phi_l, \sigma_+ \Omega_l \rangle_{\mathbb{C}^2} 
\not= 0$} with some $l \in \{1,2\}$;
\item[(2)]
$a_\Omega({\bm p}_0) \not=a_\Phi({\bm p}_0)$
with some ${\bm p}_0 \in \{-1,1\} \times \{-1,1\}$. 
\end{itemize}
If 
{$({\bm p}, {\bm q}) \in D_l$} and
$\|({\bm p}, {\bm q}) - ({\bm p}_0, {\bm 0})\|_{\mathbb{R}^2\times \mathbb{C}^2}$
is sufficiently small , $U$ has two eigenvalues in 
$\{ g_-(\lambda), g_+(\lambda) \mid 
	\lambda \in \mathbb{T}_- \}$
or
$\{ g_-(\lambda), g_+(\lambda) \mid 
	\lambda \in \mathbb{T}_+\}$. 
\end{theorem}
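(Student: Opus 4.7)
My plan is to use the spectral mapping theorem (Theorem~\ref{thm_SM}) to reduce the eigenvalue problem for $U$ to one for the discriminant operator $T=dSd^{\ast}$ on $\mathcal{K}=\ell^{2}(\mathbb{Z}^{2})$. Under the SMT, eigenvalues of $T$ lying in $\mathbb{T}_{-}\cup\mathbb{T}_{+}$ (i.e.\ strictly outside the essential spectrum $[-\lambda(\bm{q})+a_{\Phi}(\bm{p}),\lambda(\bm{q})+a_{\Phi}(\bm{p})]$ of the translation-invariant part) correspond precisely to the pairs $g_{\pm}(\lambda)$ of eigenvalues of $U_{\mathrm{I}}$ we want. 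So the whole problem becomes: produce one eigenvalue of $T$ in $\mathbb{T}_{-}$ or $\mathbb{T}_{+}$. The first step is a direct computation: using $(d\Psi)(\bm{x})=\langle\chi(\bm{x}),\Psi(\bm{x})\rangle_{\mathbb{C}^{2}}$ and the explicit form of $S=S_{1}\oplus S_{2}$, I write $T$ as a second-order finite-difference operator whose coefficients are built from $\chi(\bm{x})$ and $\chi(\bm{x}\pm\bm{e}_{j})$, and split $T=T_{0}+V$ by replacing every occurrence of $\chi$ by $\Phi$. Because $\chi$ differs from $\Phi$ only at $\bm{x}=\bm{0}$, the operator $V$ is self-adjoint and finite rank, supported on $\{\bm{0},\pm\bm{e}_{1},\pm\bm{e}_{2}\}$.

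Next I diagonalize $T_{0}$ by the Fourier transform on $\mathbb{Z}^{2}$: one obtains multiplication by
\[
\widehat{T}_{0}(\bm{k})=a_{\Phi}(\bm{p})+2\operatorname{Re}\bigl[q_{1}\langle\Phi_{1},\sigma_{+}\Phi_{1}\rangle_{\mathbb{C}^{2}}e^{ik_{1}}\bigr]+2\operatorname{Re}\bigl[q_{2}\langle\Phi_{2},\sigma_{+}\Phi_{2}\rangle_{\mathbb{C}^{2}}e^{ik_{2}}\bigr],
\]
so that $\sigma(T_{0})=[a_{\Phi}(\bm{p})-\lambda(\bm{q}),a_{\Phi}(\bm{p})+\lambda(\bm{q})]$ and, by Weyl's theorem, $\sigma_{\mathrm{ess}}(T)=\sigma(T_{0})$. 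To locate eigenvalues of $T$ outside this band I apply the Feshbach (Birman--Schwinger) map: letting $\mathcal{R}=\operatorname{Ran}V$ (finite dimensional) and $P$ the orthogonal projection onto $\mathcal{R}$, for every $z\notin\sigma(T_{0})$ the operator $Q(T-z)Q$ with $Q=I-P$ is invertible on $Q\mathcal{K}$, and
\[
z\in\sigma_{p}(T)\iff\ker F_{P}(z)\neq\{0\},\qquad F_{P}(z)=P(T_{0}+V-z)P-PVQ\bigl(Q(T_{0}-z)Q\bigr)^{-1}QVP,
\]
so the hunt for eigenvalues reduces to a finite-dimensional determinantal equation whose coefficients are built from the Green's function $G_{0}(\bm{x},\bm{y};z)=\langle\delta_{\bm{x}},(T_{0}-z)^{-1}\delta_{\bm{y}}\rangle$.

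Finally I carry out the perturbative analysis near $(\bm{p},\bm{q})=(\bm{p}_{0},\bm{0})$. At the reference point $\bm{q}=\bm{0}$, $\bm{p}=\bm{p}_{0}\in\{-1,1\}^{2}$, the shift $S$ degenerates to multiplication by $\sigma_{3}\oplus\sigma_{3}$, $T_{0}$ collapses to the scalar $a_{\Phi}(\bm{p}_{0})$, and the band shrinks to a point. For small $\bm{q}\neq\bm{0}$ the band is a short interval of half-width $\lambda(\bm{q})$ around $a_{\Phi}(\bm{p})$, and by the Fourier representation $G_{0}(\bm{x},\bm{y};z)$ develops a logarithmic blow-up at the band edges---the familiar two-dimensional band-edge divergence---so that even an arbitrarily small nonzero defect produces a bound state just outside the band. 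Condition~(2), $a_{\Omega}(\bm{p}_{0})\neq a_{\Phi}(\bm{p}_{0})$, guarantees that the diagonal part of $V$ (namely $(a_{\Omega}(\bm{p})-a_{\Phi}(\bm{p}))|\delta_{\bm{0}}\rangle\langle\delta_{\bm{0}}|$) survives in the limit, and its sign selects $\mathbb{T}_{+}$ or $\mathbb{T}_{-}$. Condition~(1) plays a structural role: the orthogonality $\Phi_{j}\cdot(\sigma_{1}\Omega_{j})=0$ forces several potentially competing off-diagonal matrix elements of $V$ to vanish, making the Feshbach matrix block-triangular, while $\langle\Phi_{l},\sigma_{+}\Omega_{l}\rangle\neq 0$ ensures that the remaining hopping entries at $\pm\bm{e}_{l}$ (which are present thanks to $\bm{q}\in D_{l}$, i.e.\ $p_{l}q_{l}\neq 0$) contribute nontrivially to the effective scalar Feshbach equation.

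The main obstacle is the step ``produce a zero of $\det F_{P}(z)$ for $z$ in exactly one of $\mathbb{T}_{-}$ or $\mathbb{T}_{+}$'': the Feshbach matrix is a $5\times 5$ object involving $G_{0}(\bm{x},\bm{y};z)$ for $\bm{x},\bm{y}\in\{\bm{0},\pm\bm{e}_{1},\pm\bm{e}_{2}\}$, and one must track the leading asymptotics of these matrix elements both as $z$ approaches a band edge and as $(\bm{p},\bm{q})\to(\bm{p}_{0},\bm{0})$ simultaneously. The orthogonality in condition~(1) is what lets this $5\times 5$ computation collapse to a tractable scalar equation whose dominant balance is between the finite jump $a_{\Omega}(\bm{p}_{0})-a_{\Phi}(\bm{p}_{0})$ and the divergent Green's function; matching these produces an eigenvalue $\lambda$ of $T$ in $\mathbb{T}_{\pm}$, and the SMT converts it to the desired pair $g_{\pm}(\lambda)\in\sigma_{p}(U)$.
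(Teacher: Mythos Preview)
Your overall architecture---SMT to reduce to $T$, split $T=T_{0}+V$, Fourier-diagonalize $T_{0}$, then Feshbach to locate eigenvalues outside the band, and finally a perturbative argument near $(\bm{p}_{0},\bm{0})$---is exactly the paper's. The differences are in the execution of the Feshbach step and in the mechanism you invoke for the existence of the eigenvalue.

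\textbf{Choice of Feshbach projection.} You take $P$ to be the rank-$5$ projection onto $\operatorname{Ran}V$ and work with a $5\times 5$ determinant. The paper instead takes the \emph{complementary} projection to be rank~$1$: with $\Pi=|\delta_{\bm 0}\rangle\langle\delta_{\bm 0}|$ one has $(\Pi(T-\lambda)\Pi)|_{\operatorname{Ran}\Pi}=(a_{\Omega}(\bm p)-\lambda)I$, and the one-defect condition alone gives $\Pi^{\perp}T\Pi^{\perp}=\Pi^{\perp}T_{0}\Pi^{\perp}$. Hence the Feshbach map is
\[
F(\lambda)=\Pi^{\perp}\Bigl(T_{0}-\lambda-\tfrac{1}{a_{\Omega}(\bm p)-\lambda}\,|\varphi_{\bm q}\rangle\langle\varphi_{\bm q}|\Bigr)\Pi^{\perp},
\]
a rank-one perturbation of a constant-coefficient operator, and the eigenvalue problem collapses immediately to the scalar equation $\mathfrak{f}(\lambda)=\lambda-a_{\Omega}(\bm p)+\langle\varphi_{\bm q},(T_{0}-\lambda)^{-1}\varphi_{\bm q}\rangle=0$. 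Your $5\times 5$ route can be made to work, but note that your claim ``$Q(T-z)Q$ is invertible on $Q\mathcal{K}$ for every $z\notin\sigma(T_{0})$'' is not automatic: $Q(T-z)Q=Q(T_{0}-z)Q$ is a finite-rank perturbation of $T_{0}-z$, and one must still check its compression to $Q\mathcal{K}$ has no kernel.

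\textbf{Role of condition (1).} You describe the orthogonality $\Phi_{j}\cdot(\sigma_{1}\Omega_{j})=0$ as making the $5\times 5$ Feshbach matrix ``block-triangular''. In the paper's rank-one formulation its role is sharper and different: it forces $\varphi_{\bm q}(\bm e_{j})e^{i\theta_{j}}+\varphi_{\bm q}(-\bm e_{j})e^{-i\theta_{j}}=0$ for every $j$, which after a Fourier computation yields $\psi_{\lambda}(\bm 0)=0$, i.e.\ $\psi_{\lambda}:=(T_{0}-\lambda)^{-1}\varphi_{\bm q}\in\operatorname{Ran}\Pi^{\perp}$. That is precisely what allows $\psi_{\lambda}$ to be the candidate kernel vector of $F(\lambda)$ and reduces everything to $\mathfrak{f}(\lambda)=0$.

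\textbf{Mechanism for the zero.} Your appeal to the two-dimensional logarithmic band-edge divergence of $G_{0}$ is a red herring here. That argument is relevant when the band is fixed and the coupling is small; in the present regime the band has half-width $\lambda(\bm q)\to 0$ while the diagonal defect $a_{\Omega}(\bm p)-a_{\Phi}(\bm p)\to a_{\Omega}(\bm p_{0})-a_{\Phi}(\bm p_{0})\neq 0$ stays of order one, so the eigenvalue sits well away from the band edge, not exponentially close to it. The paper obtains the zero of $\mathfrak{f}$ by elementary means valid in every dimension $d\ge 2$: $\mathfrak{f}$ is strictly increasing, and Jensen's inequality applied to the spectral measure of $T_{0}$ gives two-sided bounds on $\mathfrak{f}$ which, for $(\bm p,\bm q)$ near $(\bm p_{0},\bm 0)$, force a sign change on $\mathbb{T}_{-}$ (if $a_{\Omega}(\bm p_{0})<a_{\Phi}(\bm p_{0})$) or on $\mathbb{T}_{+}$ (if the reverse). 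The sign of $a_{\Omega}(\bm p_{0})-a_{\Phi}(\bm p_{0})$ selects the interval, as you correctly anticipated.
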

This is a special case of Theorem \ref{MThm}. 
See Figure \ref{fig01} for 
the location of the eigenvalues and the continuous spectrum.  
The criteria for $U_{\rm I}$ to have eigenvalues 
in 
{$\{ g_-(\lambda), g_+(\lambda) \mid 
	\lambda \in \mathbb{T}_- \}$
and $\{ g_-(\lambda), g_+(\lambda) \mid 
	\lambda \in \mathbb{T}_+ \}$
}%
are obtained in Theorem \ref{mainth}.

\paragraph{\em Methods and related work}
Localization of the one defect model of 
traditional one-dimensional QWs was solved
by Cantero et al \cite{CGMV},
who used  the CGMV method,
which is not applicable for multidimensional cases. 
In the present work, we use the SMT. 
Several studies on the birth eigenspace 
$\mathcal{D}_{\rm B}$ have been reported.   
As shown by Higuchi et al \cite{HKSS14}, 
multi-dimensional models
are likely to have eigenvalues $\pm 1$
due to the existence of cycles,
which makes $\mathcal{D}_{\rm B}$ non trivial. 
For the one-dimensional split-step QW,
the birth eigenspace is characterized elsewhere \cite{FFSloc}. 
However, 
the eigenspace contained in the inherited subspace 
$\mathcal{D}_{\rm I}$
is study only for the Grover walk case,
in which the discriminant $T$ is unitarily equivalent to 
the transition probability matrix $P_G$
of the symmetric random walk on $G$. 
In our case, 
$T$ becomes a discrete Schr\"{o}dinger operator 
with variable coefficients: 
	\begin{align*}
		T=
		a(\bm{p}, \cdot)+
		\sum_{j=1}^2
		\left\{
		q_j \langle \chi_j,  L_j \sigma_+ \chi_j \rangle
		+ (q_j \langle \chi_j,  L_j \sigma_+ \chi_j \rangle)^*
		\right\},
	\end{align*}
where 
$\chi({\bm x}) 
	= 
	{}^t(\chi_1({\bm x}), \chi_2({\bm x}) )
$,
$a(\bm{p}, \bm{x})=\sum_{j=1}^2 p_j
	\langle \chi_j(\bm{x}), \sigma_3 \chi_j(\bm{x}) \rangle$,
and $L_j$ is the shift by ${\bm e}_j$ on $\mathcal{K}$. 
To analyze the above operator $T$,
we employ the Feshbach map \cite{F}: 
$$
	F(\lambda) = 
	\Pi^{\perp}(T-\lambda)\Pi^{\perp}-\Pi^{\perp}T\Pi 
	(\Pi (T-\lambda)\Pi)_{{\Ran}\Pi}^{-1} 
	\Pi T \Pi^{\perp},
	\quad \lambda \in \mathbb{C} 
		\setminus \{a_{\Omega}({\bm p})\},
$$
where $\Pi$ is the projection onto 
$\{ \psi \in \mathcal{K} \mid \psi({\bm x}) = 0 \;
\mbox{except for ${\bm x}=0$}  \}$
and $\lambda$ is a spectral parameter. 
The isospectral property of this map implies that
$\lambda$ is an eigenvalue of $T$
if 
$\ker F(\lambda)$ is non-trivial 
(Proposition \ref{feshbach}). 
The Feshbach map was used in a study of nuclear reactions \cite{F}
and was used for constructing a 
renormalization map
\cite{BFS}. 
To our best  knowledge, 
this is the first application of the Feshbach map 
to analyze the spectrum of an evolution operator for a QW. 
The one defect condition yields the following formula:
\begin{equation}
\label{FSM}
		F(\lambda) = \Pi^{\perp}\left(
		T_0-\lambda -\frac{1}{a_{\Omega}(\bm{p})-\lambda}|\varphi_{\bm{q}} \rangle 
	\langle \varphi_{\bm{q}} |
	\right)
	\Pi^{\perp},
\end{equation}
where $\varphi_{\bm q} \in \mathcal{K}$. 
This is a one rank perturbation of 
a constant coefficient discrete Laplacian $T_0$. 
The spectral analysis of an operator 
$(1/\sqrt{d}) \sum_{j=1}^d (L_j + L_j^*) + v |\delta_{\bm 0}\rangle \langle \delta_{\bm 0}|$
(with $v$ a coupling constant and
$\delta_{\bm 0}$ the delta function at the origin)  
similar to the right-hand side in \eqref{FSM} is treated 
elsewhere \cite{HSSS}. 
Because the nonlinearity of the spectral parameter $\lambda$,
the analysis of the kernel of $F(\lambda)$ becomes more involved.
This task is reduced to finding zeros of a function 
\[ 
	\mathfrak{f}(\lambda)
	=
	\lambda -a_{\Omega}(\bm{p})+
	\langle \varphi_{\bm{q}}, 
		(T_0-\lambda)^{-1}  \varphi_{\bm q} \rangle_{\mathcal{K}},
	\quad \lambda\in [-1,1]\setminus\sigma(T_0)\neq \emptyset. 
\]

The rest of this paper is constructed as follows. 
In Sec. 2, we review the SMT, 
which plays an important role in this work. 
The precise definitions of our evolution $U$
and the discriminant $T$ are given in  Sec. 3.
We  thereafter give the essential spectrum of $T$, 
which is mapped onto the essential spectrum of $U_{\rm I}$ by the SMT. 
We also give a criterion for $T$ 
to have an eigenvalue in terms of the Feshbach map
$F(\lambda)$ (Theorem \ref{f0}). 
We  then present the main results.  
Theorem \ref{mainth} gives criteria for $U$ to have eigenvalues and 
Theorem \ref{MThm} shows the existence of eigenvalues for $U$. 
We prove Theorem \ref{MThm} using Theorem \ref{mainth}. 
Sec. 4 is devoted to the precise definition of the Feshbach map
$F(\lambda)$
and its properties. 
Sec. 5 is devoted to the analysis of $\mathfrak{f}(\lambda)$ 
and the proof of  
Theorem \ref{mainth}.

\section{Preliminaries}
In this section, we briefly review the spectral mapping theorem (SMT). 
Readers can consult \cite{HSS16,SS16} for more details. 
We use $\sigma(A)$, $\sigma_{\rm p}(A)$, 
$\sigma_{\rm ac}(A)$, $\sigma_{\rm sc}(A)$
to denote the spectrum, the set of all eigenvalues,
the absolutely continuous spectrum, 
and the singular continuous spectrum
of an operator $A$, respectively. 
Let $\mathcal{H}$ and $\mathcal{K}$ be Hilbert spaces
and $d:\mathcal{H} \to \mathcal{K}$ be a coisometry,
i.e., 
\[ dd^* = I_{\mathcal{K}}, \]
where $d^*:\mathcal{K} \to \mathcal{H}$ is the adjoint of $d$ 
and $I_{\mathcal{K}}$ is the identity  on $\mathcal{K}$. 
Then, 
$d$ is an isometry and 
$d^* d$ is a othogonal projection on $\mathcal{H}$, 
because $d^*d$ is idempotent and self-adjoint, 
i.e., $(d ^*d)^2 = d^* d$ and $(d^* d)^* = d^*d$. 
The operator
\[ C := 2d^* d -1  \] 
is a self-adjoint unitary operator,
because $C^2=1$.  
Let $S$ be a self-adjoint unitary operator on $\mathcal{H}$ and
set $U = SC$. 
The discriminant operator $T$ of $U$ is defined as 
\[ T= d S d^*, \] 
which is a bounded self-adjoint operator on $\mathcal{K}$
and $\|T\| \leq 1$. 
Hence, $\sigma(T) $ is a closed set contained in the interval 
$[-1,1]$. 
Let $\mathcal{D}_\pm =\ker d \ \cap \ker (S \pm 1) 
\subset \mathcal{H}$.
The subspaces 
\[ \mathcal{D}_{\rm B} 
:= \mathcal{D}_+ \oplus \mathcal{D}_- 
\quad \mbox{and} \quad
\mathcal{D}_{\rm I} := \mathcal{D}_{\rm B}^\perp \] 
are called the {\it birth eigenspace} of $U$ 
and {\it  inherited subspace} of $U$, respectively.
The restriction $U_{\rm I} :=U|_{\mathcal{D}_{\rm I}}$ 
to the inherited subspace is unitarily equivalent to 
\[ \exp(+i\arccos T)\oplus \exp(-i\arccos T) 
	\quad \mbox{on ${\Ran} (d^*d)$}. \]
See \cite{SS16} for the precise meaning of the above decomposition. 
On the other hand, 
the restriction $U_{\rm B} :=U|_{\mathcal{D}_{\rm B}}$
to the birth eigenspace coincides with 
$I_{\mathcal{D}_+} \oplus (-I_{\mathcal{D}_-})$. 
The SMT from $T$ to $U$ is given as follows. 
\begin{theorem}[Spectral mapping theorem \cite{HSS16,SS16}]
\label{thm_SM}
Let $U=SC$ be as above. 
Then, $U$ is decomposed into
$U = U_{\rm I} \oplus U_{\rm B}$
on $\mathcal{H} 
		= \mathcal{D}_{\rm I} \oplus \mathcal{D}_{\rm B}$
and the following hold:
\begin{enumerate}
\item $\sigma_\sharp(U) = \sigma_\sharp(U_{\rm I})$
	for $\sharp = {\rm ac}, {\rm sc}$
and $\sigma_{\rm p}(U) 
	= \sigma_{\rm p}(U_{\rm I}) \cup \sigma_{\rm p}(U_{\rm B})$; 
\item $ \sigma_\sharp(U_{\rm I}) 
	= \exp({+ i \arccos \sigma_\sharp(T)} )
		\cup  \exp({- i \arccos \sigma_\sharp(T)})$
		for $\sharp = {\rm p}, {\rm ac}, {\rm sc}$;
\item $\sigma_\sharp(U_{\rm B}) = \emptyset$ 
		for $\sharp = {\rm ac}, {\rm sc}$ and
$\sigma_{\rm p}(U_{\rm B})
		= \begin{cases} 
			\{1, -1\} & \mbox{if $\mathcal{D}_+ \not=\emptyset$
				and $\mathcal{D}_- \not=\emptyset$}, \\
			\{\pm1\} & \mbox{if $\mathcal{D}_\pm \not=\emptyset$
				and $\mathcal{D}_\mp =\emptyset$}, \\
			\emptyset & \mbox{otherwise}.
		\end{cases}
$
\end{enumerate}
\end{theorem}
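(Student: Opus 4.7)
The plan is to first decompose $\mathcal{H}$ into the invariant subspaces $\mathcal{D}_{\rm B}$ and $\mathcal{D}_{\rm I}$, analyze $U_{\rm B}$ by direct computation, and then reduce the spectral analysis of $U_{\rm I}$ to that of $T$ via an explicit unitary equivalence with $e^{+i\arccos T}\oplus e^{-i\arccos T}$.

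I would begin with the birth eigenspace. Any $\psi\in\ker d$ satisfies $C\psi=(2d^*d-1)\psi=-\psi$, so for $\psi\in\mathcal{D}_\pm=\ker d\cap\ker(S\pm 1)$ one obtains $U\psi=SC\psi=-S\psi=\pm\psi$. Hence $\mathcal{D}_{\rm B}$ is $U$-invariant with $U_{\rm B}=I_{\mathcal{D}_+}\oplus(-I_{\mathcal{D}_-})$, and since $U$ is unitary the orthogonal complement $\mathcal{D}_{\rm I}$ is invariant as well. Statement (3) and the pure-point nature of $U_{\rm B}$ follow at once.

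The core of the proof is the analysis of $U_{\rm I}$. The decisive observation is $Cd^*=(2d^*d-1)d^*=2d^*(dd^*)-d^*=d^*$, which gives the two intertwining identities $Ud^*=Sd^*$ and $dUd^*=T$. For any unit $\psi\in\mathcal{K}$, set $\phi=d^*\psi$; a direct computation yields $U\phi=S\phi$ and $U(S\phi)=SCSd^*\psi=2Sd^*T\psi-d^*\psi$, so $V_\psi={\rm span}\{\phi,S\phi\}$ is $U$-invariant. Its Gram matrix is $\bigl(\begin{smallmatrix}1 & \lambda \\ \lambda & 1\end{smallmatrix}\bigr)$ with $\lambda=\langle\psi,T\psi\rangle$, and when $\psi$ is an eigenvector of $T$ with $|\lambda|<1$ the matrix of $U|_{V_\psi}$ has characteristic polynomial $\mu^{2}-2\lambda\mu+1$ and thus eigenvalues $e^{\pm i\arccos\lambda}$. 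Passing from eigenvectors to the full spectral measure of $T$, I would assemble these two-dimensional building blocks into an isometric isomorphism $W:\Ran(d^*d)\oplus\Ran(d^*d)\to\mathcal{D}_{\rm I}$ intertwining $U_{\rm I}$ with $e^{+i\arccos T}\oplus e^{-i\arccos T}$; the natural orthonormal pair at each $\lambda\in(-1,1)$ is $\{\phi,(1-\lambda^{2})^{-1/2}(S\phi-\lambda\phi)\}$, in which $U$ acts as rotation by $\arccos\lambda$.

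The hard part will be the boundary $|\lambda|=1$, where the Gram matrix degenerates and the two-dimensional structure collapses. A vector $\psi\in\ker(T\mp 1)$ forces $Sd^*\psi=\pm d^*\psi$, so $d^*\psi\in\ker(S\mp 1)\setminus\ker d$; these directions contribute only a single eigenvalue $\pm 1$ to $U_{\rm I}$, but this is consistent with $e^{+i\arccos(\pm 1)}=e^{-i\arccos(\pm 1)}=\pm 1$. I would check carefully that such boundary contributions are exactly accounted for, are orthogonal to $\mathcal{D}_{\rm B}$ (since $d^*\psi\notin\ker d$ whenever $\psi\neq 0$), and that the range of $W$ together with $\mathcal{D}_{\rm B}$ exhausts $\mathcal{H}$. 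Once this completeness is in hand, the functional calculus for $T$ on $\mathcal{K}$ transports $\sigma_\sharp(T)$ to $\sigma_\sharp(U_{\rm I})$ under $\lambda\mapsto e^{\pm i\arccos\lambda}$ for each spectral type $\sharp\in\{{\rm p},{\rm ac},{\rm sc}\}$, yielding statements (1) and (2).
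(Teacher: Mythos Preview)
The paper does not give its own proof of this theorem: Theorem~\ref{thm_SM} is quoted from \cite{HSS16,SS16} as a preliminary result, with the reader referred to those references for details. There is therefore no in-paper argument to compare against.

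That said, your sketch is essentially the argument carried out in those references. The key identities $Cd^*=d^*$, $dUd^*=T$, and the two-dimensional $U$-invariant blocks spanned by $\{d^*\psi,\;Sd^*\psi\}$ with characteristic polynomial $\mu^{2}-2\lambda\mu+1$ are exactly the backbone of the proof in \cite{SS16}. Your treatment of the boundary cases $\lambda=\pm 1$ is also correct in spirit: $T\psi=\pm\psi$ forces $\|Sd^*\psi\mp d^*\psi\|^{2}=2(1\mp\langle\psi,T\psi\rangle)=0$, collapsing the block to a single eigenvector. The one point you flag as ``hard'' --- completeness, i.e.\ that the closed linear span of $d^*\mathcal{K}+Sd^*\mathcal{K}$ together with $\mathcal{D}_{\rm B}$ exhausts $\mathcal{H}$ --- is handled in \cite{SS16} by showing that the orthogonal complement of $d^*\mathcal{K}+Sd^*\mathcal{K}$ equals $\ker d\cap\ker(dS)=\ker d\cap S\ker d$, which decomposes as $\mathcal{D}_+\oplus\mathcal{D}_-$ since $S$ is a self-adjoint involution on the $S$-invariant subspace $\ker d\cap S\ker d$. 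With that observation your outline becomes a complete proof.
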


Theorem \ref{thm_SM} is widely applicable for the evolutions
of quantum walks. 
Here we give two examples. The first one is the Szegedy walk. 
See \cite{HKSS14} for the twisted Szegedy walk.   
\begin{example}[\cite{HKSS14, HSS16}]
Let $D$ be the set of arcs of a  symmetric digraph $G=(V,D)$
(possibly not bipartite)
and $\mathcal{H} = \ell^2(D)$ 
be the Hilbert space of square summable functions 
$\psi:D \to \mathbb{C}$. 
Define  a unitary operator $U_G$ on $\ell^2(D)$ 
as the product 
\[ U_G =S_{\rm f}C_\chi \] 
of a shift $S_{\rm f}$ and coin $C_{\chi}$. 
The shift $S_{\rm f}$ is defined as 
$(S_{\rm f}\psi)(e) = \psi(\bar{e})$ for $e \in D$,
where $\bar{e}$ stands for the inverse arc of $e$. 
The coin $C_\chi$ is defined as
\[ C_\chi  = \bigoplus_{x \in V} 
	\left(2|\chi(x)\rangle \langle \chi(x)| -1\right), \]
where we have used an identification 
$\ell^2(D) \simeq \bigoplus_{x \in V} \mathcal{H}_x$
with 
$\mathcal{H}_x 
= \overline{\rm Span}\{\psi \colon \psi(e) = 0, o(e) \not=x \}$
and $\chi(x) = \sum\limits_{e \in D; o(e)=x} \sqrt{p_{t(e),x}}\delta_e
\in \ell^2(D)$ is a normalized vector. 
Here $p_{u,v}$ is the transition probability of a (classical) random walk
from $v$  to $u$ ($u,v \in V$). 
The QW with this evolution $U_G$ is now referred
to as the Szegedy walk on $G$,
which is called the Grover walk on $G$ in particular 
if  $p_{u,v} = 1/{\rm deg} x$. 
In the case of the Szegedy evolution operator $U_G$,
Theorem \ref{thm_SM} is applicable 
for any symmetric digraph $G = (V,D)$,
because $S_{\rm f}$ is self-adjoint and unitary 
and $C_\chi = 2d_\chi^* d_\chi -1$
with a coisometry $d_\chi:\ell^2(D) \to \ell^2(V)$
defined as 
\[ (d_\chi \psi)(x) := \langle \chi(x), \psi \rangle,
	\quad x \in V. \] 
Moreover, the discriminant operator $T_G$ of $U_G$
is unitary equivalent to the transition probability matrix 
$P_G = (p_{u,v})$ and 
the birth eigenspace can be characterized by the structure of $G$. 
\end{example}
The next example is a one-dimensional QW 
but not the Szegedy walk on $\mathbb{Z}$.  
This is a unified model including a split-step QW 
introduced by Kitagawa et al \cite{KRBD}
and traditional one-dimensional QWs 
\cite{Am01,Gud,Mey} as special cases. 
In the subsequent sections, 
we consider a multidimensional extension of this model. 
\begin{example}[Split-step QWs]\label{ex_SS}
The evolution of a split-step QW
is a unitary operator on 
$\ell^2(\mathbb{Z};\mathbb{C}^2)$ defined as
$U = S_1 C$, 
where 
\[ (S_1\psi)(x) = \begin{pmatrix} 
	p \psi_1(x) + q \psi_2(x+1) \\ 
	q^* \psi_1(x-1) - p_2 \psi(x) \end{pmatrix},
		\quad x \in \mathbb{Z}. 
\]
We suppose that $(p,q) \in \mathbb{R} \times \mathbb{C}$
satisfy $p^2 + |q|^2=1$,
which ensure $S_1$ is unitary and self-adjoint.
$C$ is a multiplication operator by unitary matrices $C(x) \in U(2)$. 
When $p=0$ and $q=1$, it becomes a QW on 
$\mathbb{Z}$ with a flip-flop shift \cite{AKR},
which is unitarily equivalent to traditional 
QWs 
(see \cite{O} for more information). 
The evolutions with $p=0$ and $p\not=0$ 
are not unitarily equivalent and these walks have 
weak limit measures different from usual one \cite{FFSwlt}. 
If $C(x)$ is self-adjoint unitary and 
$\dim \ker(C(x)-1) = 1$ for all $x \in \mathbb{Z}$,
then $C$ is written as $2d_\chi^* d_\chi -1$
with a coisometry 
$d_\chi:\ell^2(\mathbb{Z};\mathbb{C}^2) \to \ell^2(\mathbb{Z})$
defined as 
\[ (d_\chi \Psi)(x) = \langle \chi(x), \Psi(x) \rangle_{\mathbb{C}^2},
	\quad x \in \mathbb{Z}, \]
where $\chi(x) \in \ker(C(x)-1)$. 
Thus the SMT is applicable for this model. 
In \cite{FFSloc}, the birth eigenspace of this model is characterized.  
\end{example}

\section{Multi-dimensional models and main results}
\subsection{Definition of models}
From now on, we consider a QW on $\mathbb{Z}^d$,
which is a generalization of the split-step QW defined 
in Example \ref{ex_SS}.  
Let $n\in\mathbb{N}$ and 
use $\ell^2(\mathbb{Z}^d ; \mathbb{C}^n)$ 
to denote the Hilbert space of 
the square-summable functions 
$\Psi : \mathbb{Z}^d\to \mathbb{C}^{n}$. 
If $n=1$, we simply denote $\ell^2(\mathbb{Z}^d ; \mathbb{C})$ by 
$\ell^2(\mathbb{Z}^d)$. 
Hereafter, we set 
$\mathcal{H}=\ell^2(\mathbb{Z}^d;\mathbb{C}^{2d})$ and 
$\mathcal{K}=\ell^2(\mathbb{Z}^d)$.
We fist define an evolution operator $U$ on $\mathcal{H}$
as a product $U=SC$ of a shift operator $S$ and coin operator $C$,
then introduce a coisometry $d:\mathcal{H} \to \mathcal{K}$,
and give an explicit formula of the discriminant operator 
$T=d S d^*$ on $\mathcal{K}$. 

\paragraph{Shift operators}
Let
\[ D = \{ ({\bm p},{\bm q}) = (p_1,\ldots, p_d,q_1,\ldots,q_d) 
	\in \mathbb{R}^d \times \mathbb{C}^d 
			\colon  \mbox{$p_j^2+|q_j|^2=1$ ($j=1,\ldots,d$)} \} \]	
and use $\{\bm{e}_j\}_{j=1}^d$ 
to denote the standard basis of $\mathbb{Z}^d$.
Henceforth $({\bm p}, {\bm q}) \in D$ is assumed 
unless otherwise specified. 
To define a shift operator $S$ on $\mathcal{H}$, we introduce
an operator $S_j$ on  $\ell^2(\mathbb{Z}^d ; \mathbb{C}^2)$ 
($j=1,\ldots,d$) as follows.
$$
	(S_j\psi)(\bm{x})=
	\begin{pmatrix}
		p_j\psi_1(\bm{x})+q_j\psi_2(\bm{x}+\bm{e}_j) \\
		q_j^{\ast}\psi_1(\bm{x}-\bm{e}_j)-p_j\psi_2(\bm{x})
	\end{pmatrix}
	\quad \text{for all}
	\
	\bm{x}\in \mathbb{Z}^d
	\
	\text{and}
	\
	\psi= 
	\begin{pmatrix}
		\psi_1\\
		\psi_2
	\end{pmatrix}
	\in \ell^2(\mathbb{Z}^d ; \mathbb{C}^2).
$$
Using the identification 
$\mathcal{H} \simeq \oplus_{j=1}^d 
\ell^2(\mathbb{Z}^d ; \mathbb{C}^2)$,
we define the shift $S$ on $\mathcal{H}$ as
$
	S= S_1 \oplus \ldots \oplus S_d
$, i.e., 
\begin{align*}
	(S\Psi)(\bm{x})=
	\begin{pmatrix}
		(S_1\Psi_1)(\bm{x})\\
		\vdots \\
		(S_d\Psi_d)(\bm{x})
	\end{pmatrix}
	\
	\text{for all}\ 
	\bm{x}\in \mathbb{Z}^d
	\
	\text{and}
	\
	\Psi=
	\begin{pmatrix}
		\Psi_1\\
		\vdots \\
		\Psi_d
	\end{pmatrix}
	\in \mathcal{H}\ 
		(\Psi_j\in\ell^2(\mathbb{Z}^d ; \mathbb{C}^2)).
\end{align*}
The condition $({\bm p},{\bm q}) \in D$ ensures 
$S_j$ is self-adjoint and unitary on 
$\ell^2(\mathbb{Z}^d ; \mathbb{C}^2)$,
and so is $S$ on $\mathcal{H}$. 
Let $\{C(\bm{x})\}_{\bm{x}\in\mathbb{Z}^d}\subset U(2d)$ be a 
family of unitary and self-adjoint square matrices of order $2d$. 
\paragraph{Coin operators}
We define a coin operator $C$ on $\mathcal{H}$ as 
a multiplication operator by $C(x)$, i.e., 
$$
	(C\Psi)(\bm{x})=C(\bm{x})\Psi(\bm{x})
	\quad
	\text{for all}\
	\bm{x}\in \mathbb{Z}^d
	\
	\text{and}
	\
	\Psi \in \mathcal{H}.
$$
By definition, $C$ is unitary on $\mathcal{H}$.
Throughout this paper, 
the following two conditions are imposed on $C$
unless otherwise specified.
\begin{itemize}
	\item ({\bf Simplicity}) 
	Each $C(\bm{x})$ has 1 as a simple eigenvalue, i.e., 
	$$
	\dim \ker (C(\bm{x})-1)=1,
	\quad \bm{x}\in\mathbb{Z}^d.
	$$
	\item ({\bf One defect}) There exist matrices 
	$C_0$ and $C_1 \in U(2d)$ such that
	\begin{align*}
		C(\bm{x})
		&=
		\begin{cases}
			C_1, & {\bm x}\in \mathbb{Z}^d\setminus\{\bm{0}\}, \\
			C_0, & {\bm x}=\bm{0}.
		\end{cases}
	\end{align*}
\end{itemize}
Because $\dim \ker(C(\bm{x})-1)=1$, 
we can take a unique normalized vector 
(up to a constant factor):
$$
\chi(\bm{x})
=\begin{pmatrix}\chi_1(\bm{x})\\
\vdots\\
\chi_d(\bm{x})\end{pmatrix}
\in \ker(C(\bm{x})-1), \quad 
\chi_j(\bm{x})=
\begin{pmatrix}
	\chi_{j,1}(\bm{x})\\
	\chi_{j,2}(\bm{x})
\end{pmatrix}
\in \mathbb{C}^2\ (j=1,\cdots,d).
$$ 
The spectral decomposition of $C(\bm{x})$
implies 
$
	C(\bm{x})=2|\chi(\bm{x})\rangle\langle\chi(\bm{x})|-1
$. 
By the one defect condition of $C$,
$\chi(\bm{x})$ is written as follows. 
\begin{align}
\label{chiPO}
		\chi(\bm{x})
		&=
		\begin{cases}
			\Phi=
			\begin{pmatrix}
				\Phi_1\\
				\vdots \\
				\Phi_d
			\end{pmatrix}\ 
			\text{with}\ 
			\Phi_j=
			\begin{pmatrix}
				\phi_{j,1}\\
				\phi_{j,2}
			\end{pmatrix}
			\in\mathbb{C}^2\ (j = 1,\cdots,d),
			& 
			\bm{x}\in\mathbb{Z}^d\setminus\{\bm{0}\}, \\
			\Omega=
			\begin{pmatrix}
				\Omega_1\\
				\vdots \\
				\Omega_d
			\end{pmatrix}\ 
			\text{with}\ 
			\Omega_j=
			\begin{pmatrix}
				\omega_{j,1}\\
				\omega_{j,2}
			\end{pmatrix}
			\in\mathbb{C}^2\ (j = 1,\cdots,d),
			& 
			\bm{x}=\bm{0}.
		\end{cases}
\end{align}
\paragraph{Evolutions and their discriminants}
Let $S$ and $C$ be as above and define an evolution operator 
$U$ on $\mathcal{H}$ as 
$$
	U=SC.
$$
$S$ and $C$ are unitary, 
and so is $U$.
We define a coisometry
$d : \mathcal{H}\to \mathcal{K}$ as 
$$
	(d\Psi)(\bm{x})=
	\langle
		\chi (\bm{x}), \Psi(\bm{x})
	\rangle_{\mathbb{C}^{2d}}\quad 
	\text{for all}\  \bm{x}\in\mathbb{Z}^d
	\ \text{and} \
	\Psi\in \mathcal{H}.
$$

\begin{lemma}\label{lem_d}
	\begin{enumerate}
	\item 
		The adjoint $d^* : \mathcal{K}\to \mathcal{H}$
		of $d$ is a multiplication operator by $\chi({\bm x})$,
		i.e.,
		\begin{align*}
			(d^{\ast}f)(\bm{x})
			&=
			\chi (\bm{x})f(\bm{x})
			\quad
			\text{for all}\ 
			\bm{x}\in\mathbb{Z}^d
			\ \text{and} \
			f\in\mathcal{K}.
		\end{align*}
	\item 
		$\displaystyle 
		d^{\ast}d=\bigoplus_{\bm{x}\in\mathbb{Z}^d}
			|\chi(\bm{x})\rangle\langle\chi(\bm{x})|$
	\ and \ $dd^{\ast}=I_{\mathcal{K}}$.
	\item $C=2d^{\ast}d-1$.
	\end{enumerate}
\end{lemma}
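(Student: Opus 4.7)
The plan is to verify the three claims by direct computation, treating the coisometry $d$ fiberwise over lattice points $\bm{x} \in \mathbb{Z}^d$. This is essentially bookkeeping: the hard work (choosing $\chi(\bm{x})$ and establishing $\|\chi(\bm{x})\|_{\mathbb{C}^{2d}} = 1$) has already been done in the setup.

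For part (1), I would unfold the defining inner product $\langle d\Psi, f\rangle_{\mathcal{K}} = \sum_{\bm{x}} \overline{\langle \chi(\bm{x}), \Psi(\bm{x})\rangle_{\mathbb{C}^{2d}}}\, f(\bm{x})$ and use conjugate-linearity in the first slot of the $\mathbb{C}^{2d}$ inner product to rewrite this as $\sum_{\bm{x}} \langle \Psi(\bm{x}), \chi(\bm{x})\, f(\bm{x})\rangle_{\mathbb{C}^{2d}}$. Reading off the definition of the adjoint then gives $(d^*f)(\bm{x}) = \chi(\bm{x})\, f(\bm{x})$. A brief remark that $d^*f$ is square summable (since $\|(d^*f)(\bm{x})\|_{\mathbb{C}^{2d}} = |f(\bm{x})|$ because $\chi(\bm{x})$ is a unit vector) confirms $d^*$ is bounded.

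For part (2), I would combine the formula for $d^*$ with the definition of $d$. Computing $(d^*d\Psi)(\bm{x}) = \chi(\bm{x}) \langle \chi(\bm{x}), \Psi(\bm{x})\rangle_{\mathbb{C}^{2d}} = |\chi(\bm{x})\rangle\langle\chi(\bm{x})|\, \Psi(\bm{x})$ yields the fiberwise rank-one projection, and assembling over $\bm{x}$ gives the asserted orthogonal direct sum. Conversely $(dd^*f)(\bm{x}) = \langle \chi(\bm{x}), \chi(\bm{x}) f(\bm{x})\rangle_{\mathbb{C}^{2d}} = \|\chi(\bm{x})\|_{\mathbb{C}^{2d}}^2\, f(\bm{x}) = f(\bm{x})$ uses only the normalization of $\chi(\bm{x})$, establishing $dd^* = I_{\mathcal{K}}$ and justifying the name ``coisometry.''

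For part (3), I would invoke the spectral decomposition already recorded in the text: since $C(\bm{x})$ is self-adjoint unitary with $\dim\ker(C(\bm{x})-1)=1$ and the remaining spectrum in $\{-1\}$, one has $C(\bm{x}) = 2|\chi(\bm{x})\rangle\langle\chi(\bm{x})| - 1$. Applying this pointwise and invoking part (2) gives $C = 2d^*d - 1$ on $\mathcal{H}$.

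There is no genuine obstacle here; the lemma is a direct calculation that packages the definitions. The only item worth guarding against is a sign or complex-conjugation slip in identifying the adjoint, which is why I would write that calculation out in full before proceeding to the two corollaries.
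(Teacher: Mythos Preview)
Your proposal is correct and matches the paper's own proof essentially line for line: both verify $d^*$ via the inner-product pairing, compute $d^*d$ and $dd^*$ fiberwise using the normalization of $\chi(\bm{x})$, and deduce $C=2d^*d-1$ from the pointwise spectral decomposition $C(\bm{x})=2|\chi(\bm{x})\rangle\langle\chi(\bm{x})|-1$ already recorded in the setup. There is nothing to add.
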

\begin{proof} 
(1)\ For all $f\in\mathcal{K}$, since
$
	\sum_{\bm{x}\in\mathbb{Z}^d}\|\chi(\bm{x})
	f(\bm{x})\|_{\mathbb{C}^{2d}}^2
	= \sum_{\bm{x}\in\mathbb{Z}^d}|f(\bm{x})|^2
	=\|f\|_{\mathcal{K}}^2<\infty
$, 
then the multiplication operator 
$\chi : \mathcal{K}\ni f\mapsto \chi f\in\mathcal{H}$ 
is bounded.
For all $\Psi\in\mathcal{H}$ and $f\in\mathcal{K}$, 
$
	\langle f, d\Psi\rangle_{\mathcal{K}} = \sum_{\bm{x}\in\mathbb{Z}^d}
	f(\bm{x})^{\ast}\langle \chi (\bm{x}), \Psi(\bm{x})
	\rangle_{\mathbb{C}^{2d}}
	=\langle \chi f, \Psi \rangle_{\mathcal{H}}
$. 
Thus we have $d^{\ast}f=\chi f$.\\
(2)\ For all $\Psi\in \mathcal{H}$ and $\bm{x}\in\mathbb{Z}^d$, 
$
	(d^{\ast}d\Psi)(\bm{x})
	=\chi(\bm{x})(d\Psi)(\bm{x})
	=\langle \chi(\bm{x}),\Psi(\bm{x})\rangle\chi(\bm{x})
$
holds. Then we have 
$d^{\ast}d=\bigoplus_{\bm{x}\in\mathbb{Z}^d}
|\chi(\bm{x})\rangle\langle\chi(\bm{x})|$.
On the other hand, for all $f\in\mathcal{K}$ and $\bm{x}\in\mathbb{Z}^d$, 
$
	(dd^{\ast}f)(\bm{x})=
	\langle \chi(\bm{x}), (d^{\ast}f)(\bm{x})\rangle_{\mathbb{C}^{2d}}
	=\langle \chi(\bm{x}), f(\bm{x})\chi(\bm{x})\rangle_{\mathbb{C}^{2d}}
	=f(\bm{x})
$. 
Then $dd^{\ast}=I_{\mathcal{K}}$ holds.\\
(3)\ Obviously, the result follows from 
$d^{\ast}d=\bigoplus_{\bm{x}\in\mathbb{Z}^d}
|\chi(\bm{x})\rangle\langle\chi(\bm{x})|$.
\end{proof}
Lemma \ref{lem_d} implies that
Theorem \ref{thm_SM} is applicable for the above evolution $U$.  
\medskip 
In what follows, we give an explicit form of the discriminant operator
$T$ of $U$, defined as
\[ T = d S d^*. \] 
Let $L_j$ be a shift on $\mathcal{K}$ 
by ${\bm e}_j$ ($j\in \{1,\cdots, d\}$),
i.e.,
$$
	(L_jf)(\bm{x})=f(\bm{x}+\bm{e}_j), \quad \text{for all}\
	\bm{x}\in\mathbb{Z}^d
	\ \text{and} \
	f\in\mathcal{K},
$$
by which $S_j$ can be expressed as a matrix form
$$
	S_j=
	\begin{pmatrix}
		p_j I_{\mathcal{K}} & q_jL_j \\
		q_j^{\ast}L_j^{\ast} & -p_jI_{\mathcal{K}}
	\end{pmatrix}. 
$$
We use the following notations: 
\begin{align*}
	&a_\Omega({\bm p})
	= \sum_{j=1}^d p_j 
	\langle \Omega_j, \sigma_3 \Omega_j \rangle_{\mathbb{C}^2}, 
	\quad
	a_\Phi({\bm p})
	= \sum_{j=1}^d p_j 
	\langle \Phi_j, \sigma_3 \Phi_j \rangle_{\mathbb{C}^2}, 
	\\
 	&\quad \mbox{and} \quad 
	a(\bm{p}, \bm{x})=\sum_{j=1}^dp_j
	\langle \chi_j(\bm{x}), \sigma_3 \chi_j(\bm{x}) \rangle_{\mathbb{C}^2}, 
\end{align*}
where
$\sigma_3 = \begin{pmatrix} 1 & 0 \\ 0 & -1 \end{pmatrix}$.
Observe that
\begin{equation}
\label{apx}
	a(\bm{p}, \bm{x})=a_\Omega({\bm p})\mathbbm{1}_{\{\bm{0}\}}
	(\bm{x})+ a_\Phi({\bm p})\mathbbm{1}_{\mathbb{Z}^d\setminus\{\bm{0}\}}
	(\bm{x}),
\end{equation} 
where $\mathbbm{1}_{A}$ is the characteristic function of a set $A$. 
As seen in Section 2, 
the discriminant operator $T = d Sd^*$ of $U$  
is a bounded self-adjoint on $\mathcal{K}$ and $\|T\|\leq 1$.  
\begin{lemma}\label{repT}
$T$ is expressed as
\begin{align}
\label{eq_repT}		
T=
		a(\bm{p}, \cdot)+
		\sum_{j=1}^d
		\left\{
		q_j\chi_{j,1}^{\ast}L_j\chi_{j,2} + 
		(q_j\chi_{j,1}^{\ast}L_j\chi_{j,2})^{\ast}
		\right\},
\end{align}
where $\chi_{j,1}, \chi_{j,2}$ and $a(\bm{p}, \cdot)$ denote multiplication operators.
\end{lemma}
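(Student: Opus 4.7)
The plan is to compute $T = dSd^*$ directly by applying the three operators in sequence to an arbitrary $f \in \mathcal{K}$, using the formulas from Lemma~\ref{lem_d} and the definition of the diagonal shift $S = S_1 \oplus \cdots \oplus S_d$ under the identification $\mathcal{H} \simeq \oplus_{j=1}^d \ell^2(\mathbb{Z}^d;\mathbb{C}^2)$. By Lemma~\ref{lem_d}(1), the adjoint satisfies $(d^*f)(\bm{x}) = \chi(\bm{x})f(\bm{x})$, so the $j$-th $\mathbb{C}^2$-component of $d^*f$ is the function $\bm{x} \mapsto \chi_j(\bm{x})f(\bm{x})$.

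Next I would apply $S_j$ to this $j$-th component, using the explicit matrix formula for $S_j$, producing two-component vectors whose entries involve $p_j\chi_{j,1}(\bm{x})f(\bm{x})$, $q_j\chi_{j,2}(\bm{x}+\bm{e}_j)f(\bm{x}+\bm{e}_j)$, $q_j^*\chi_{j,1}(\bm{x}-\bm{e}_j)f(\bm{x}-\bm{e}_j)$, and $-p_j\chi_{j,2}(\bm{x})f(\bm{x})$. Then the final application of $d$ gives
\[
(Tf)(\bm{x}) = \sum_{j=1}^{d}\langle \chi_j(\bm{x}), (S_j(\chi_j f))(\bm{x})\rangle_{\mathbb{C}^2}.
\]
Expanding this inner product yields three types of terms: (i) the ``diagonal'' contribution $\sum_j p_j(|\chi_{j,1}(\bm{x})|^2 - |\chi_{j,2}(\bm{x})|^2)f(\bm{x})$, which equals $\sum_j p_j\langle\chi_j(\bm{x}),\sigma_3\chi_j(\bm{x})\rangle_{\mathbb{C}^2}\,f(\bm{x}) = a(\bm{p},\bm{x})f(\bm{x})$; (ii) the forward-shift contribution $\sum_j q_j\chi_{j,1}(\bm{x})^*\chi_{j,2}(\bm{x}+\bm{e}_j)f(\bm{x}+\bm{e}_j)$; and (iii) the backward-shift contribution $\sum_j q_j^*\chi_{j,2}(\bm{x})^*\chi_{j,1}(\bm{x}-\bm{e}_j)f(\bm{x}-\bm{e}_j)$.

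To finish, I would recognize term (ii) as $\sum_j (q_j\chi_{j,1}^* L_j \chi_{j,2} f)(\bm{x})$, since $(L_j\chi_{j,2}f)(\bm{x}) = \chi_{j,2}(\bm{x}+\bm{e}_j)f(\bm{x}+\bm{e}_j)$ and then multiplication by $\chi_{j,1}^*$ on the left gives exactly the expression above. Term (iii) is then checked to be $\sum_j (q_j\chi_{j,1}^* L_j \chi_{j,2})^* f(\bm{x})$ by computing $(q_j\chi_{j,1}^* L_j \chi_{j,2})^* = q_j^*\chi_{j,2}^* L_j^*\chi_{j,1}$ and using $L_j^* = L_{-j}$ in the obvious sense; alternatively, self-adjointness of $T$ (which follows from self-adjointness of $S$) forces (iii) to equal the adjoint of (ii), so only the explicit matching of one term is strictly necessary. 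Combining these gives the claimed formula \eqref{eq_repT}.

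The main obstacle is purely notational bookkeeping: distinguishing the scalar-valued multiplication operators $\chi_{j,1},\chi_{j,2}$ on $\mathcal{K}$ from the $\mathbb{C}^2$-valued function $\chi_j(\bm{x})$, and correctly tracking how the shift $L_j$ commutes past these multiplications (it does not, which is why the term $\chi_{j,1}^* L_j \chi_{j,2}$ cannot be simplified further). Nothing conceptually deep is involved, but one must be careful with the order of operations to avoid spurious shifts on the wrong factor.
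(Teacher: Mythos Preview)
Your proposal is correct and follows essentially the same approach as the paper: compute $(Tf)(\bm{x}) = \sum_{j=1}^d \langle \chi_j(\bm{x}), (S_j(\chi_j f))(\bm{x})\rangle_{\mathbb{C}^2}$ using the matrix form of $S_j$, and identify the diagonal part as $a(\bm{p},\cdot)$ and the off-diagonal parts as the shift terms. Your exposition is in fact a bit more detailed than the paper's, which simply writes out the inner product and reads off the result in one line.
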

\begin{remark}
In Section 1, we abbreviate the expression \eqref{eq_repT} as
\begin{align*}
		T=
		a(\bm{p}, \cdot)+
		\sum_{j=1}^2
		\left\{
		q_j \langle \chi_j,  L_j \sigma_+ \chi_j \rangle
		+ (q_j \langle \chi_j,  L_j \sigma_+ \chi_j \rangle)^*
		\right\}
		\ \mbox{with $\sigma_+ = \begin{pmatrix} 0 & 1 \\ 0 & 0 \end{pmatrix}$. }
	\end{align*}
\end{remark}
\begin{proof} 
For all $f\in \mathcal{K}, j\in \{1,\cdots, d\}$,
we set 
$(d^{\ast}f)_j=\chi_j f \in \ell^2(\mathbb{Z}^d ; \mathbb{C}^2)$.
Since
$d^{\ast}f=
\begin{pmatrix}
	(d^{\ast}f)_1\\
	\vdots\\
	(d^{\ast}f)_d
\end{pmatrix}$, 
we have 
$Sd^{\ast}f=
\begin{pmatrix}
	S_1(d^{\ast}f)_1\\
	\vdots \\
	S_d(d^{\ast}f)_d
\end{pmatrix}$.
By definition of $T$, the following holds 
for all $f\in \mathcal{K}$ and $\bm{x}\in \mathbb{Z}^d$: 
\begin{align*}
	(Tf)(\bm{x})&=\langle \chi(\bm{x}), (Sd^{\ast}f)(\bm{x})
	\rangle_{\mathbb{C}^{2d}}
	=\sum_{j=1}^d\langle \chi_j(\bm{x}), (S_j(d^{\ast}f)_j)(\bm{x})
	\rangle_{\mathbb{C}^{2}}\\
	&=
	\sum_{j=1}^d
	\left\langle 
		\begin{pmatrix}
			\chi_{j,1}\\
			\chi_{j,2}
		\end{pmatrix}
		(\bm{x}), 
		\left(
			\begin{pmatrix}
				p_j & q_jL_j\\
				q_j^{\ast}L_j^{\ast} & -p_j
			\end{pmatrix}
			\begin{pmatrix}
				f\chi_{j,1}\\
				f\chi_{j,2}
			\end{pmatrix}
		\right)
		(\bm{x})
	\right\rangle_{\mathbb{C}^{2}}\\
	&=
	\left(
	\left(
	a(\bm{p}, \cdot)+
		\sum_{j=1}^d
		\left\{
			q_j\chi_{j,1}^{\ast}L_j\chi_{j,2} + 
			(q_j\chi_{j,1}^{\ast}L_j\chi_{j,2})^{\ast}
		\right\}
	\right)
		f
	\right)
	(\bm{x}).
\end{align*}
\end{proof}

We close this subsection by characterizing 
the essential spectrum of the discriminant $T$. 
To this ends, we introduce a self-adjoint operator $T_0$ 
and constant $\lambda({\bm q})$ by
\begin{align*}
	T_0 
	= 
	a_{\Phi}(\bm{p})+ \sum_{j=1}^d
	(\alpha_jL_j + \alpha_j^{\ast}L_j^{\ast})
	\quad
	\mbox{and} \quad
	\lambda({\bm q}) =2\sum_{j=1}^d|\alpha_j|,
\end{align*}
where 
$\alpha_j=q_j\phi_{j,1}^{\ast}\phi_{j,2} \ (j=1,\cdots,d)$.
In Sec. 1, we set
$\displaystyle 
\lambda({\bm q}) 
	= 2\sum_{j=1}^2 \left|q_j \langle \Phi_j, \sigma_+ \Phi_j 
		\rangle_{\mathbb{C}^2}\right|$,
because  $\alpha_j=q_j \langle \Phi_j, \sigma_+ \Phi_j 
		\rangle_{\mathbb{C}^2}$.
\begin{lemma}\label{sT0}
It follows that
	\begin{align}\label{spectrumT0}
		\sigma_{\mathrm{ess}}(T)=\sigma_{\mathrm{ess}}(T_0)
		=\sigma(T_0)=[-\lambda({\bm{q}})+a_{\Phi}(\bm{p}), 
		a_{\Phi}(\bm{p})+\lambda({\bm{q}})]. 
	\end{align}
Moreover, the following conditions are equivalent:
	\begin{enumerate}
		\item $\sigma(T_0)=[-1, 1]$;
		\item $\lambda({\bm{q}})=1$; 
		\item $p_j=0$ and 
		$|\phi_{j,1}|=|\phi_{j,2}|$
		for all $j\in \{1,\cdots,d\}$. 
	\end{enumerate}
\end{lemma}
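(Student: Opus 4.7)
The plan is to reduce the spectral analysis of $T$ to that of the constant-coefficient operator $T_0$ via a finite-rank perturbation argument, and then to diagonalize $T_0$ by Fourier transform. First I would verify that $T - T_0$ has finite rank. Using the one-defect condition, $\chi(\bm{x}) = \Phi$ for $\bm{x} \neq \bm{0}$ while $\chi(\bm{0}) = \Omega$, so by \eqref{apx} the diagonal part $a(\bm{p}, \cdot) - a_\Phi(\bm{p})$ is a rank-one multiplication operator supported at $\bm{x} = \bm{0}$. Moreover, each off-diagonal term in Lemma \ref{repT} satisfies $(\chi_{j,1}^* L_j \chi_{j,2} f)(\bm{x}) = \chi_{j,1}(\bm{x})^* \chi_{j,2}(\bm{x} + \bm{e}_j) f(\bm{x}+\bm{e}_j)$, which agrees with $\phi_{j,1}^* \phi_{j,2} f(\bm{x}+\bm{e}_j)$ except at $\bm{x} = \bm{0}$ and $\bm{x} = -\bm{e}_j$, giving at most a rank-two difference per direction. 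Weyl's theorem on the essential spectrum then yields $\sigma_{\mathrm{ess}}(T) = \sigma_{\mathrm{ess}}(T_0)$.

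Second, I would diagonalize $T_0$ via the Fourier transform $\mathcal{F}:\ell^2(\mathbb{Z}^d) \to L^2(\mathbb{T}^d)$, under which $L_j$ becomes multiplication by $e^{ik_j}$. Writing $\alpha_j = |\alpha_j|e^{i\theta_j}$, $\mathcal{F} T_0 \mathcal{F}^{-1}$ is multiplication by the continuous symbol
\[
\tau(\bm{k}) = a_\Phi(\bm{p}) + 2\sum_{j=1}^d |\alpha_j|\cos(k_j + \theta_j),
\]
whose range, by independence of the $k_j$, is precisely $[a_\Phi(\bm{p}) - \lambda(\bm{q}), \, a_\Phi(\bm{p}) + \lambda(\bm{q})]$. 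Since multiplication by a non-constant real-analytic function on a torus has purely absolutely continuous spectrum coinciding with its range (the level sets $\{\tau = c\}$ have Lebesgue measure zero), we obtain $\sigma(T_0) = \sigma_{\mathrm{ess}}(T_0)$ equal to this interval, proving \eqref{spectrumT0}; the degenerate case where all $\alpha_j = 0$ reduces $T_0$ to a scalar and is handled trivially.

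For the chain of equivalences, (1) $\Leftrightarrow$ (2) comes down to matching endpoints: the symmetric interval $[-1,1]$ forces $a_\Phi(\bm{p}) = 0$ and $\lambda(\bm{q}) = 1$. The nontrivial direction (2) $\Rightarrow$ (1) requires showing $\lambda(\bm{q}) = 1 \Rightarrow a_\Phi(\bm{p}) = 0$, which I would obtain from the AM-GM chain
\[
\lambda(\bm{q}) = 2\sum_{j=1}^d |q_j||\phi_{j,1}||\phi_{j,2}| \;\leq\; \sum_{j=1}^d |q_j|(|\phi_{j,1}|^2 + |\phi_{j,2}|^2) \;\leq\; \sum_{j=1}^d (|\phi_{j,1}|^2 + |\phi_{j,2}|^2) = 1,
\]
using $\|\Phi\|_{\mathbb{C}^{2d}} = 1$ and $p_j^2 + |q_j|^2 = 1$; saturation of the first inequality forces $|\phi_{j,1}| = |\phi_{j,2}|$ whenever $\Phi_j \neq 0$, hence $a_\Phi(\bm{p}) = \sum_j p_j(|\phi_{j,1}|^2 - |\phi_{j,2}|^2) = 0$. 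The equivalence (2) $\Leftrightarrow$ (3) is then read off from the same equality conditions: the AM-GM step tightens exactly when $|\phi_{j,1}| = |\phi_{j,2}|$, and the bound $|q_j| \leq 1$ tightens iff $|q_j| = 1$, i.e., $p_j = 0$.

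The only delicate point I anticipate is making sure no embedded eigenvalues appear in $\sigma(T_0)$, so that $\sigma(T_0) = \sigma_{\mathrm{ess}}(T_0)$; this rests on the standard observation that the level sets of the trigonometric polynomial $\tau$ have measure zero, which must be checked in the degenerate corner cases where some coefficients $\alpha_j$ vanish. Everything else is direct bookkeeping.
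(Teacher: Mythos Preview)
Your proposal is correct and follows essentially the same route as the paper: show $T-T_0$ is finite rank (supported on $\{\bm{0},\pm\bm{e}_1,\dots,\pm\bm{e}_d\}$) to invoke Weyl's theorem, diagonalize $T_0$ by Fourier transform to read off the range of the symbol $a_\Phi(\bm{p})+2\sum_j|\alpha_j|\cos(k_j+\theta_j)$, and use the AM--GM chain $2\sum_j|q_j||\phi_{j,1}||\phi_{j,2}|\le\sum_j|q_j|\|\Phi_j\|^2\le 1$ together with its equality conditions for the equivalences. One minor difference: for $(2)\Rightarrow(1)$ the paper deduces $a_\Phi(\bm{p})=0$ from the containment $\sigma(T_0)=\sigma_{\mathrm{ess}}(T)\subset[-1,1]$, whereas you extract it from the AM--GM equality case; both work, and your extra care about the absence of eigenvalues for the multiplication operator (measure-zero level sets of $\tau$) is a point the paper leaves implicit.
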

\begin{remark}
Let 
$g_\pm(\lambda) = e^{\pm i \arccos \lambda}$. 
The spectral mapping theorem (Theorem \ref{thm_SM}) 
concludes that
\[ 
\sigma_{\rm ess}(U) 
	= \{ g_-(\lambda) \mid \lambda \in \sigma(T_0) \}
		\cup 
	 \{ g_+(\lambda) \mid \lambda \in \sigma(T_0) \}. 
\]
See Figure \ref{fig01}. 
\end{remark}
\begin{proof}
Let 
${W} = T - T_0$. Then $T = T_0 + {W}$ and
\begin{align*}
	{W}
	&=
	a(\bm{p,x})-a_{\Phi}(\bm{p})
	+
	\sum_{j=1}^dq_j(\chi_{j,1}(\bm{x})^{\ast}\chi_{j,2}(\bm{x}+\bm{e}_j)
	-\phi_{j,1}^{\ast}\phi_{j,2})L_j \\
	&+
	\sum_{j=1}^dq_j^*(\chi_{j,2}(\bm{x})^{\ast}\chi_{j,1}(\bm{x}-\bm{e}_j)
	-\phi_{j,2}^{\ast}\phi_{j,1})
	L_j^*. 
\end{align*}
Because, by \eqref{chiPO} and \eqref{apx},
$({W}f)(\bm{x})=0$ for all $\bm{x}\neq \pm \bm{e}_j, \bm{0}$
and $f \in \mathcal{K}$,
\[ 
	{W} =  \beta_0 \mathbbm{1}_{\{\bm{0}\}}  
	+ \sum_{j=1}^d
	\left\{ \beta^+_j \mathbbm{1}_{\{\bm{e}_j\}}
	+  \beta^-_j  \mathbbm{1}_{\{-\bm{e}_j\}} \right\}
\]
with some constants  $\beta_0$ and $\beta^\pm_j$ ($j=1,\dots,d$). 
Because ${W}$ is compact, $\sigma_{\mathrm{ess}}(T)= \sigma_{\mathrm{ess}}(T_0)$. 

Let 
$\mathcal{F} : \mathcal{K}\to L^2([0,2\pi]^d, d\bm{k}/(2\pi)^d)$ 
be the Fourier transformation defined as the unitary extension of
$$
	(\mathcal{F}f)(\bm{k}) = 
	\hat{f}(\bm{k})=\sum_{\bm{x}\in\mathbb{Z}^d}
	e^{-i\bm{k \cdot x}}
	f(\bm{x})
	\quad \text{for all $f \in \mathcal{K}$ with finite support}.
$$
Because
$\mathcal{F}L_j\mathcal{F}^{\ast}$ 
and 
$\mathcal{F}L_j^{\ast}\mathcal{F}^{\ast}$ are
multiplication operators by $e^{ik_j}$ and $e^{-ik_j}$,
the Fourier transform $\mathcal{F}T_0\mathcal{F}^{\ast}$
of $T_0$
is also a multiplication operator by 
\begin{align}\label{range}
	\hat{T_0}({\bm k}) 
	=a_{\Phi}(\bm{p})+2\sum_{j=1}^d|\alpha_j|\cos (k_j+\theta_j), 
\end{align}
where each $\theta_j\in [0,2\pi)$ is an argument of $\alpha_j$, i.e., 
$\alpha_j=|\alpha_j|e^{i\theta_j}$
(if $\alpha_j=0$, we define $\theta_j=0$).
Because 
$\hat T_0([0,2\pi]^d) = [-\lambda_{\bm{q}}+a_{\Phi}(\bm{p}), a_{\Phi}(\bm{p})+\lambda_{\bm{q}}]$, 
we have \eqref{spectrumT0}. 

(\ref{spectrumT0}) implies that 
$a_{\Phi}(\bm{p})=0$ if $\lambda({\bm q})=1$ 
and hence that
$\sigma(T_0)=[-1, 1]$ if and only if
$
		\lambda({\bm q})=1
$.
On the other hand, 
$|q_j|\leq 1$ and the inequality of 
arithmetic and geometric means yield the inequality
$$
	\lambda({\bm q}) \leq 2\sum_{j=1}^d|\phi_{j,1}\phi_{j,2}|
	\leq \sum_{j=1}^d(|\phi_{j,1}|^2+|\phi_{j,2}|^2)=1
$$
with equality if and only if $|q_j|=1$ and $|\phi_{j,1}|=|\phi_{j,2}|$
for all $j\in \{1,\cdots,d\}$. 
This completes the proof. 
\end{proof}

\subsection{Main results}
In what follows, we prove 
the existence of discrete eigenvalues of $U_{\rm I}$.
To this end, we impose the following on the coin operator,
which corresponds to (1) in Theorem 
\ref{thm_d=2} for the case of $d=2$. 
Let
	$\sigma_1=\begin{pmatrix}
	0&1\\
	1&0
	\end{pmatrix}$. 
{We use \, $\cdot$ \, to denote the scalar product,
i.e., $\Psi \cdot \Phi = \psi_1 \phi_1 + \psi_2 \phi_2$
for $\Psi = {}^t (\psi_1, \psi_2)$, 
$\Phi = {}^t (\phi_1, \phi_2) \in \mathbb{C}^2$. }
\begin{assumption}\label{ratio}
\begin{itemize}
\item[(a)] 
{$\Phi_j 
		\cdot (\sigma_1 \Omega_{j}) =0$ }
	for all $j\in\{1,\cdots,d\}$; 
\item[(b)] 
{$\langle \Phi_l, 
			\sigma_+ \Omega_l \rangle_{\mathbb{C}^2}
		\neq 0$}
	with some $l \in\{1,\cdots, d\}$. 
\end{itemize}
\end{assumption}
Let $l$ be as in Assumption \ref{ratio}
and set
\begin{align*}
	D_{l} = \{(\bm{p, q}) \in D
	\colon  p_l q_{l}\neq 0\}.
\end{align*}
Lemma \ref{sT0} shows that
if $(\bm{p, q}) \in D_l$, then
$\sigma_{\mathrm{ess}}(T)=\sigma(T_0)\subsetneq [-1,1]$. 
Hence, 
there can exist discrete eigenvalues of $T$ in 
$[-1,1]\setminus \sigma(T_0)\neq \emptyset$. 
In order to find the discrete eigenvalue, 
we introduce a function $\mathfrak{f}: [-1,1]\setminus 
	\sigma(T_0) \to \mathbb{R}$ as follows. 
Let 
\[
	\varphi_{\bm{q}} = 
	\sum_{j=1}^d\left(
	q_j\omega_{j,2}\phi_{j,1}^{\ast}\mathbbm{1}_{\{-\bm{e}_j\}}
	+
	q_j^{\ast}\omega_{j,1}\phi_{j,2}^{\ast}
	\mathbbm{1}_{\{\bm{e}_j\}}
	\right)
	\in\mathcal{K}.
\]
For $\lambda\in [-1,1]\setminus\sigma(T_0)\neq \emptyset$,
we define
\[ 
	\mathfrak{f}(\lambda)
	=
	\lambda -a_{\Omega}(\bm{p})+
	\langle \varphi_{\bm{q}}, \psi_{\lambda} \rangle_{\mathcal{K}},
\]
where 
\begin{equation}
\label{eq_pl}
\psi_{\lambda} 
	= (T_0-\lambda)^{-1}\varphi_{\bm{q}}\in\mathcal{K}.
\end{equation} 
{Let $\sigma_- = \sigma_+^*$. 
Because $\varphi_{\bm q}$ is written as 
\begin{equation}
\label{phiq}
\varphi_{\bm q}
	= \sum_{j=1}^d  
	\left( q_j  \langle \Phi_j, \sigma_+ \Omega_j \rangle
		\mathbbm{1}_{\{-\bm{e}_j\}}
		+ q_j^{\ast}\langle \Phi_j, \sigma_- \Omega_j \rangle
			  \mathbbm{1}_{\{\bm{e}_j\}}
	\right),
\end{equation}
$({\bm p}, {\bm q}) \in D_l$ ensures that
$\varphi_{\bm q} \not \equiv 0$ and $\psi_\lambda \not\equiv 0$.
}

The next theorem plays an important role to show the eigenvalue of $T$. 
\begin{theorem}\label{f0}
Suppose that Assumption \ref{ratio} holds
and $({\bm p}, {\bm q}) \in D_l$. 
If $\mathfrak{f}$ has a zero $\lambda_\star \in [-1,1]\setminus 
	(\sigma(T_0)\cup \{a_{\Omega}(\bm{p})\})$,
then  $\lambda_\star$ is a discrete eigenvalue of $T$. 
\end{theorem}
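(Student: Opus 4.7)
My plan is to apply the isospectral property of the Feshbach map (Proposition~\ref{feshbach}): a scalar $\lambda_\star \ne a_\Omega(\bm p)$ is an eigenvalue of $T$ whenever $\ker F(\lambda_\star)\ne\{0\}$, with $F$ given by \eqref{FSM}. The core of the argument will be to exhibit $\psi_{\lambda_\star}$ from \eqref{eq_pl} itself as a nonzero element of that kernel.

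The main obstacle, and the part I regard as the heart of the proof, will be to show that Assumption~\ref{ratio}(a) forces $\psi_{\lambda_\star}(\bm 0)=0$, so that $\psi_{\lambda_\star}\in\Ran\Pi^\perp$, the space on which $F(\lambda_\star)$ acts. Reading off the coefficients from \eqref{phiq}, set $\beta_j:=q_j\omega_{j,2}\phi_{j,1}^{\ast}$ and $\gamma_j:=q_j^{\ast}\omega_{j,1}\phi_{j,2}^{\ast}$. I would first combine the relation $\phi_{j,1}\omega_{j,2}+\phi_{j,2}\omega_{j,1}=0$ coming from (a) with $\alpha_j=q_j\phi_{j,1}^{\ast}\phi_{j,2}=|\alpha_j|e^{i\theta_j}$ to derive the symmetry identity
\[
  \beta_j + \gamma_j e^{2i\theta_j} = 0
\]
(degenerate cases $\alpha_j=0$, which force $\beta_j=\gamma_j=0$, are handled separately). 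Passing to Fourier space via the transform $\mathcal F$ from the proof of Lemma~\ref{sT0}, I would write
\[
  \psi_{\lambda_\star}(\bm 0)
  = \int_{[0,2\pi]^d}
  \frac{\hat\varphi_{\bm q}(\bm k)}{\hat T_0(\bm k)-\lambda_\star}
  \,\frac{d\bm k}{(2\pi)^d},
  \qquad
  \hat\varphi_{\bm q}(\bm k)
  = \sum_{j=1}^{d}\bigl(\beta_j e^{ik_j}+\gamma_j e^{-ik_j}\bigr),
\]
with $\hat T_0(\bm k)$ as in \eqref{range}. The componentwise substitution $k_j\mapsto -k_j-2\theta_j$ (mod $2\pi$) preserves $\hat T_0$, since it depends on $k_j$ only through $\cos(k_j+\theta_j)$, while the symmetry identity above makes $\hat\varphi_{\bm q}$ change sign. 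Hence the integral equals its own negative and therefore vanishes.

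Once $\psi_{\lambda_\star}\in\Ran\Pi^\perp$ is in hand, the rest is immediate. Using $(T_0-\lambda_\star)\psi_{\lambda_\star}=\varphi_{\bm q}$, $\Pi^\perp\varphi_{\bm q}=\varphi_{\bm q}$, and rewriting the hypothesis $\mathfrak f(\lambda_\star)=0$ as $\langle\varphi_{\bm q},\psi_{\lambda_\star}\rangle_{\mathcal K}=a_\Omega(\bm p)-\lambda_\star$, the two terms in \eqref{FSM} applied to $\psi_{\lambda_\star}$ cancel, giving $F(\lambda_\star)\psi_{\lambda_\star}=0$. For nontriviality, Assumption~\ref{ratio}(b) combined with $(\bm p,\bm q)\in D_l$ gives $\beta_l = q_l\langle\Phi_l,\sigma_+\Omega_l\rangle_{\mathbb C^2}\ne 0$, so $\varphi_{\bm q}\not\equiv 0$ and hence $\psi_{\lambda_\star}=(T_0-\lambda_\star)^{-1}\varphi_{\bm q}\ne 0$. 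Proposition~\ref{feshbach} then yields $\lambda_\star\in\sigma_{\rm p}(T)$, and since $\lambda_\star\notin\sigma(T_0)=\sigma_{\rm ess}(T)$ by Lemma~\ref{sT0}, it is a discrete eigenvalue.
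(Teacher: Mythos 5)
Your proposal is correct and follows essentially the same route as the paper: you exhibit $\psi_{\lambda_\star}$ as a nonzero element of $\ker F(\lambda_\star)$ and invoke the isospectrality of the Feshbach map, establishing $\psi_{\lambda_\star}\in\Ran\Pi^{\perp}$ by the same Fourier-space symmetry derived from Assumption \ref{ratio}(a). Your identity $\beta_j+\gamma_j e^{2i\theta_j}=0$ is precisely the paper's relation $\varphi_{\bm{q}}(\bm{e}_j)e^{i\theta_j}+\varphi_{\bm{q}}(-\bm{e}_j)e^{-i\theta_j}=0$, merely packaged as an odd-integrand substitution instead of an even/odd decomposition after shifting $k_j$ by $\theta_j$.
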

\begin{remark}
By Lemma \ref{evaluationf} (3),
$a_{\Omega}({\bm p})$ can not be a zero of $\mathfrak{f}$
even if $a_\Omega({\bm p}) \in [-1,1]\setminus \sigma(T_0)$.
{Hence, 
$\mathfrak{f}$ has a zero $\lambda_\star \in [-1,1]\setminus 
	(\sigma(T_0)\cup \{a_{\Omega}(\bm{p})\})$
(if it exists) and 
Theorem \ref{f0} concludes that 
$\lambda_\star \in \sigma_{\rm p}(U)$. 
The SMT and Theorem \ref{f0} imply that 
$g_\pm(\lambda_\star) \in \sigma_{\rm p}(U)$
are discrete eigenvalues of $U$. 
}
See Figure \ref{fig01}. 
\end{remark}
The proof of Theorem \ref{f0} is based on 
the Feshbach projection method \cite{F, BFS}. 
This reduces the spectral analysis of $T$ to that of the Feshbach map 
$F(T, P, \lambda)$,
which is an operator defined by $T$, 
a projection $P$ suitably chosen, and a spectral parameter $\lambda$. 
Let 
$\Pi =|\mathbbm{1}_{\{\bm{0}\}}\rangle\langle \mathbbm{1}_{\{\bm{0}\}}|$
be the projection onto the subspace 
$\{ \alpha \mathbbm{1}_{\{\bm{0}\}} 
	\mid \alpha \in \mathbb{C}\} \subset \mathcal{K}$
and $\Pi^{\perp} = I_{\mathcal{K}}- \Pi$.  
Here we chose $P = \Pi^\perp$
as the projection defining the Feshbach map
and set $F(\lambda) = F(T, \Pi^\perp, \lambda)$.  
See Sec. 4 for the precise definition of $F(\lambda)$ and 
propositions used in the following proof. 
\begin{proof}[Proof of Theorem \ref{f0}]
By Proposition \ref{Fh},
$F(\lambda)$ is written as
$$
		F(\lambda) = \Pi^{\perp}\left(
		T_0-\lambda -\frac{1}{a_{\Omega}(\bm{p})-\lambda}|\varphi_{\bm{q}} \rangle 
	\langle \varphi_{\bm{q}} |
	\right)
	\Pi^{\perp},
	\quad \lambda \in \mathbb{C} \setminus \{a_\Omega({\bm p})\}.
	$$ 
Let $\lambda_\star \in [-1,1]\setminus (\sigma(T_0)\cup\{a_{\Omega}(\bm{p})\})$ be a zero of $\mathfrak{f}$,
i.e., $\mathfrak{f}(\lambda_\star)=0$,
and let $\psi_{\lambda_\star}$ be defined in \eqref{eq_pl}
with $\lambda = \lambda_\star$.  
Because by Proposition \ref{psilambda}, 
$\psi_{\lambda_\star}\in\Ran\Pi^{\perp}\setminus \{0\}$,
\begin{align*}
F(\lambda_\star) \psi_{\lambda_\star}
& =	\Pi^\perp \left(
		T_0-\lambda_\star -\frac{1}{a_{\Omega}(\bm{p})-\lambda_\star}|\varphi_{\bm{q}} \rangle 
	\langle \varphi_{\bm{q}} |
	\right)
\psi_{\lambda_\star} \\
& = 
	\left(
		1-\frac{\langle \varphi_{\bm{q}}, \psi_{\lambda_\star} 
		\rangle}{a_{\Omega}(\bm{p})-\lambda}
	\right)
	\varphi_{\bm{q}} =
	- \frac{\mathfrak{f}(\lambda_\star)}{a_{\Omega}(\bm{p})-\lambda_\star} 
	\varphi_{\bm{q}} =0.
\end{align*}
This completes the proof,
because by Proposition \ref{feshbach},
$\lambda_\star \in \sigma_{\rm p}(T)$
is equivalent that
$\ker F(\lambda_\star)$ is non trivial,
which is confirmed by Proposition \ref{psilambda} again. 
\end{proof}
The following is a criterion for $\mathfrak{f}$ to have a zero. 
\begin{theorem}
\label{mainth}
Suppose that Assumption \ref{ratio} holds and 
$(\bm{p}, \bm{q})\in D_l$. 
{%
\begin{itemize}
\item[(1)] 
$\mathfrak{f}$ has a zero $\lambda_\star 
\in \mathbb{T}_- := [-1, -\lambda({\bm q})+a_{\Phi}({\bm p}))$
if 
\begin{equation}
\lambda({\bm{q}})(\lambda({\bm{q}})+a_{\Omega}(\bm{p})
		-a_{\Phi}(\bm{p})) 
		<
		\|\varphi_{\bm{q}} \|^2 
		\leq
		(1+a_{\Omega}(\bm{p}))
		\frac{(1+a_{\Phi}(\bm{p}))^2-\lambda({\bm{q}})^2}{1+a_{\Phi}(\bm{p})};
		\label{up}
\end{equation}
\item[(2)] 
$\mathfrak{f}$ has a zero $\lambda_\star 
\in \mathbb{T}_+ = (\lambda({\bm q})+a_{\Phi}({\bm p}), 1]$
if 
\begin{equation}
\lambda({\bm{q}})(\lambda({\bm{q}})-a_{\Omega}(\bm{p})
		+a_{\Phi}(\bm{p})) 
		<
		\|\varphi_{\bm{q}} \|^2 
		\leq
		(1-a_{\Omega}(\bm{p}))
		\frac{(1-a_{\Phi}(\bm{p}))^2-\lambda({\bm{q}})^2}{1-a_{\Phi}(\bm{p})}. 
		\label{down}
\end{equation}
\end{itemize}
}
\end{theorem}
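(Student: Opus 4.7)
The plan is to apply the intermediate value theorem to $\mathfrak{f}$ on each of the intervals $\mathbb{T}_-$ and $\mathbb{T}_+$. The function is real-analytic on $[-1,1]\setminus\sigma(T_0)$, and differentiation yields
\[
\mathfrak{f}'(\lambda) = 1 + \langle \varphi_{\bm{q}}, (T_0-\lambda)^{-2}\varphi_{\bm{q}}\rangle_{\mathcal{K}} \geq 1,
\]
so $\mathfrak{f}$ is strictly increasing on each connected component. Existence of a zero $\lambda_\star$ therefore reduces to verifying opposite signs of $\mathfrak{f}$ at the two endpoints of the relevant interval: in case (1), $\mathfrak{f}(-1)\leq 0$ together with $\lim_{\lambda \uparrow -\lambda(\bm{q})+a_\Phi(\bm{p})}\mathfrak{f}(\lambda) > 0$; and in case (2), $\mathfrak{f}(1)\geq 0$ together with $\lim_{\lambda \downarrow \lambda(\bm{q})+a_\Phi(\bm{p})}\mathfrak{f}(\lambda) < 0$.

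The key algebraic input behind the sharpness of the bounds is the identity
\[
\langle \varphi_{\bm{q}}, T_0 \varphi_{\bm{q}}\rangle_{\mathcal{K}} = a_\Phi(\bm{p})\,\|\varphi_{\bm{q}}\|^2_{\mathcal{K}},
\]
which follows because $\varphi_{\bm{q}}$ is supported on $\{\pm\bm{e}_j\}_{j=1}^d$, so each shift $L_j$ (resp.\ $L_j^*$) moves the support off itself, annihilating every off-diagonal contribution in $\langle\varphi_{\bm{q}}, \sum_{j=1}^d(\alpha_j L_j + \alpha_j^* L_j^*)\varphi_{\bm{q}}\rangle$. Combined with $\int d\mu = \|\varphi_{\bm{q}}\|^2_{\mathcal{K}}$, this supplies two moment constraints on the spectral measure $\mu = d\langle\varphi_{\bm{q}}, E_t \varphi_{\bm{q}}\rangle$ of $T_0$, which by Lemma \ref{sT0} is supported on $[a_\Phi(\bm{p})-\lambda(\bm{q}),\,a_\Phi(\bm{p})+\lambda(\bm{q})]$.

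The upper bounds on $\|\varphi_{\bm{q}}\|^2_{\mathcal{K}}$ in the theorem come from sharp extremal estimates for $\int (t+1)^{-1}d\mu$ in case (1) and $\int(t-1)^{-1}d\mu$ in case (2) among all measures obeying these two moments. Since $t\mapsto 1/(t+1)$ is strictly convex on the spectrum while $t\mapsto 1/(t-1)$ is strictly concave there, the relevant extremum is attained by a two-atom measure sitting at the endpoints $a_\Phi(\bm{p})\pm\lambda(\bm{q})$; a direct calculation with that measure produces the denominator $(1\pm a_\Phi(\bm{p}))^2 - \lambda(\bm{q})^2$ and converts $\mathfrak{f}(\mp 1)\leq 0$ (resp.\ $\geq 0$) into the stated upper bounds. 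For the lower bounds, I would apply Jensen's inequality to $t\mapsto 1/(t-\lambda)$, which is convex on $\sigma(T_0)$ when $\lambda < \inf\sigma(T_0)$ (case (1)) and concave when $\lambda > \sup\sigma(T_0)$ (case (2)); this gives
\[
\mathfrak{f}(\lambda) \geq \lambda - a_\Omega(\bm{p}) + \frac{\|\varphi_{\bm{q}}\|^2_{\mathcal{K}}}{a_\Phi(\bm{p}) - \lambda}
\]
in case (1) and the reverse inequality in case (2). Taking $\lambda$ to the inner endpoint of the respective interval yields exactly the lower bounds stated in (1) and (2).

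The main obstacle is correctly identifying the two-atom extremal measure and carrying out the algebra so that the result collapses to the specific rational expression $\bigl((1\pm a_\Phi(\bm{p}))^2 - \lambda(\bm{q})^2\bigr)/(1\pm a_\Phi(\bm{p}))$; this step hinges crucially on the moment identity $\langle\varphi_{\bm{q}}, T_0\varphi_{\bm{q}}\rangle = a_\Phi(\bm{p})\|\varphi_{\bm{q}}\|^2$, which in turn relies on the lattice support of $\varphi_{\bm{q}}$. Once these bounds are secured, strict monotonicity of $\mathfrak{f}$ together with the opposite-sign endpoint configuration allows the intermediate value theorem to supply the zero $\lambda_\star$ claimed in (1) and (2).
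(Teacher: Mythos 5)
Your proposal is correct and follows essentially the same route as the paper: strict monotonicity of $\mathfrak{f}$ from the explicit derivative, the moment identity $\langle\varphi_{\bm{q}},T_0\varphi_{\bm{q}}\rangle=a_{\Phi}(\bm{p})\|\varphi_{\bm{q}}\|^2$, Jensen's inequality for the bound at the inner endpoint and the chord estimate (equivalently, your two-atom extremal measure at $a_{\Phi}(\bm{p})\pm\lambda(\bm{q})$) for the bound at $\mp 1$, followed by the intermediate value theorem. The paper packages exactly these steps as Lemma \ref{monotone}, Lemma \ref{evaluationf}, and the sign conditions (L)/(R), so the content is identical.
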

Thanks to Lemma \ref{neqpm1} below,
the right-hand sides of (\ref{up}) and (\ref{down}) make sense.  
The proof of 
Theorem \ref{mainth} will be stated in the last section. 
\begin{lemma}\label{neqpm1}
Suppose that Assumption \ref{ratio} holds and $(\bm{p}, \bm{q})\in D_l$.
Then, 
\[ a_{\Phi}(\bm{p})\neq \pm 1
	\quad \text{and} \quad  a_{\Omega}(\bm{p})\neq \pm 1.
\]
\end{lemma}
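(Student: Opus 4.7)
The plan is to reduce both inequalities $a_{\Phi}(\bm p)\neq\pm 1$ and $a_{\Omega}(\bm p)\neq \pm 1$ to a single elementary bound, then show that the bound is strict thanks to the two hypotheses. The only tools needed are the normalization $\|\Phi\|=\|\Omega\|=1$, the defining condition $p_j^2+|q_j|^2=1$, and the nonvanishing extracted from Assumption \ref{ratio}(b).

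Starting from $\sigma_3=\mathrm{diag}(1,-1)$, I observe that
\[
|\langle \Phi_j,\sigma_3\Phi_j\rangle_{\mathbb{C}^2}|
= \bigl||\phi_{j,1}|^2-|\phi_{j,2}|^2\bigr|
\leq \|\Phi_j\|^2,
\]
so the triangle inequality gives
\[
|a_{\Phi}(\bm p)|
\leq \sum_{j=1}^d |p_j|\,\|\Phi_j\|^2
\leq \sum_{j=1}^d \|\Phi_j\|^2 = 1,
\]
and analogously $|a_{\Omega}(\bm p)|\leq 1$. Thus it suffices to show that each of the two chains of inequalities is strict.

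For strictness I isolate the index $l$ provided by Assumption \ref{ratio}(b). Since $\langle \Phi_l,\sigma_+\Omega_l\rangle_{\mathbb{C}^2}=\phi_{l,1}^{\ast}\omega_{l,2}\neq 0$, both $\phi_{l,1}\neq 0$ and $\omega_{l,2}\neq 0$, so in particular $\|\Phi_l\|^2>0$ and $\|\Omega_l\|^2>0$. On the other hand, $(\bm p,\bm q)\in D_l$ forces $q_l\neq 0$, and combined with $p_l^2+|q_l|^2=1$ this yields $|p_l|<1$. Then
\[
|a_{\Phi}(\bm p)|
\leq \sum_{j\neq l}\|\Phi_j\|^2 + |p_l|\,\|\Phi_l\|^2
< \sum_{j=1}^d \|\Phi_j\|^2 = 1,
\]
which proves $a_{\Phi}(\bm p)\neq\pm 1$. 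Replacing $\Phi_j$ by $\Omega_j$ throughout and using $\|\Omega_l\|^2>0$ in place of $\|\Phi_l\|^2>0$ gives the same strict bound for $a_{\Omega}(\bm p)$.

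I do not expect any genuine obstacle: the argument is a two-line triangle-inequality estimate once the nonvanishing of the $l$-th blocks $\Phi_l,\Omega_l$ is noted. The only mild point to verify is that Assumption \ref{ratio}(b) really delivers $\Phi_l\neq 0$ \emph{and} $\Omega_l\neq 0$ simultaneously, which is immediate from the explicit form $\langle \Phi_l,\sigma_+\Omega_l\rangle=\phi_{l,1}^{\ast}\omega_{l,2}$ of the inner product involving $\sigma_+$.
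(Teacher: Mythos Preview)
Your proof is correct and cleaner than the paper's. The paper argues by contradiction: assuming $a_\Phi(\bm p)=-1$ and adding this to $\|\Phi\|^2=1$ gives $\sum_{j}\bigl((1+p_j)|\phi_{j,1}|^2+(1-p_j)|\phi_{j,2}|^2\bigr)=0$, forcing each nonnegative summand to vanish; they then invoke \emph{both} parts of Assumption~\ref{ratio} (part~(a) is needed to turn $\phi_{l,1}\neq 0$ into $\phi_{l,2}\neq 0$) together with $p_l\neq 1$ to obtain a contradiction, and repeat for the other three cases. Your argument instead bounds $|a_\Phi(\bm p)|$ directly via the triangle inequality and gets strictness in one stroke from $|p_l|<1$ and $\|\Phi_l\|>0$, using only part~(b) of Assumption~\ref{ratio}. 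So you avoid the case split and the appeal to part~(a); the paper's route, by contrast, identifies exactly which coordinate must fail and thus carries a bit more structural information, but for the lemma as stated your approach is strictly more economical.
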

\begin{proof}
Suppose $a_{\Phi}(\bm{p})=-1$. 
By the definition of $a_{\Phi}(\bm{p})$ and  $\|\Phi\|^2=1$, 
\begin{equation*}
	-1
	=
	\sum_{j\notin A}p_j(|\phi_{j,1}|^2-|\phi_{j,2}|^2)
	\quad \text{and} \quad
	1
	=
	\sum_{j\notin A}(|\phi_{j,1}|^2+|\phi_{j,2}|^2),
\end{equation*}
where $A = \{ j\in \{1,\cdots, d\} \mid \phi_{j,1} =\phi_{j,2}= 0\}$. 
Summing the above two equations, 
we get
$
	0=\sum_{j\notin A}
	\left\{(1+p_j)|\phi_{j,1}|^2+(1-p_j)|\phi_{j,2}|^2\right\},
$
which, combined with $1\pm p_j \geq 0$, 
implies that for $j \not\in A$,
\begin{equation}
\label{eq383}
(1-p_j)|\phi_{j,2}|^2=0. 
\end{equation}
By Assumption \ref{ratio} and $({\bm p}, {\bm q}) \in D_l$,
$\phi_{l,2} \not=0$, $p_l \not= 1$, and hence $l \not\in A$. 
This contradicts \eqref{eq383}.  
Therefore $a_{\Phi}(\bm{p})\neq -1$.
The remainder can be shown similarly.
\end{proof}
To state our main result, we introduce the following assumption. 
\begin{assumption}\label{ap0}
$a_{\Omega}(\bm{p}_0)\neq a_{\Phi}(\bm{p}_0)$ holds
with some $\bm{p}_0 \in \{-1,1\}^d$. 
\end{assumption}

\begin{remark}
If $d=1$, then Assumptions \ref{ratio}-\ref{ap0} are not compatible. 
See \cite{FFSloc} for $d=1$. 
\end{remark}

\begin{theorem}[Existence of eigenvalues]\label{MThm}
	Let $d\geq 2$ and 
	suppose that Assumptions \ref{ratio}-\ref{ap0} holds. 
	Then, 
	there exists $\delta >0$ such that
	if $(\bm{p},\bm{q}) \in D_l$ satisfies 
$\|(\bm{p},\bm{q})-(\bm{p}_0, \bm{0})\|_{\mathbb{R}^d\times\mathbb{C}^d}<\delta$, 
then there exist eigenvalues of $U$. 
Moreover, the following hold. 
\begin{itemize}
\item[(1)]
If, in addition, $a_\Omega({\bm p}_0) < a_\Phi({\bm p}_0)$,
then $g_-(\lambda_\star)$ and $g_+(\lambda_\star)$
are eigenvalues of $U_{\rm I}$ with some 
$\lambda_\star \in \mathbb{T}_-$;
\item[(2)]
If, in addition, $a_\Omega({\bm p}_0) > a_\Phi({\bm p}_0)$,
then $g_-(\lambda_\star)$ and $g_+(\lambda_\star)$
are eigenvalues of $U_{\rm I}$ with some 
$\lambda_\star \in \mathbb{T}_+$. 
\end{itemize}
\end{theorem}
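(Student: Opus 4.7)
The plan is to chain three earlier results. Theorem \ref{mainth}(1) (resp.\ (2)) produces a zero $\lambda_\star \in \mathbb{T}_-$ (resp.\ $\mathbb{T}_+$) of $\mathfrak{f}$. Since $\mathbb{T}_\pm$ is disjoint from $\sigma(T_0)$ by construction and the remark following Theorem \ref{f0} excludes $\lambda_\star = a_\Omega(\bm{p})$, Theorem \ref{f0} then promotes $\lambda_\star$ to a discrete eigenvalue of $T$; Theorem \ref{thm_SM}(2) finally yields $g_\pm(\lambda_\star) \in \sigma_{\mathrm{p}}(U_{\mathrm{I}}) \subset \sigma_{\mathrm{p}}(U)$. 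Everything therefore reduces to verifying hypothesis \eqref{up} in case (1), and \eqref{down} in case (2), on a small enough neighborhood of $(\bm{p}_0, \bm{0})$ inside $D_l$.

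The main preparatory observation, which I expect to be the crux, is that Assumption \ref{ratio} by itself forces $a_\Omega(\bm{p}_0), a_\Phi(\bm{p}_0) \in (-1, 1)$. For instance, $a_\Omega(\bm{p}_0) = -1$ combined with $\bm{p}_0 \in \{-1, 1\}^d$ would pin $\Omega_l$ to an eigenvector of $p_{0,l}\sigma_3$ with eigenvalue $-1$; a short computation then shows that $\langle \Phi_l, \sigma_+ \Omega_l \rangle \neq 0$ and $\Phi_l \cdot \sigma_1 \Omega_l = 0$ cannot hold simultaneously, contradicting Assumption \ref{ratio}. The other boundary cases $a_\Omega(\bm{p}_0) = 1$ and $a_\Phi(\bm{p}_0) = \pm 1$ are ruled out by the symmetric arguments. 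Without this, the right-hand side of \eqref{up} (resp.\ \eqref{down}) could vanish at the same rate $O(\|\bm{q}\|^2)$ as $\|\varphi_{\bm{q}}\|^2$ along $D_l$ (using $1 - p_j p_{0,j} \sim \tfrac{1}{2}|q_j|^2$), and the soft continuity argument below would be insufficient.

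Case (1), $a_\Omega(\bm{p}_0) < a_\Phi(\bm{p}_0)$, is then handled by continuity. I would shrink $\delta$ so that $a_\Omega(\bm{p}) - a_\Phi(\bm{p}) < \tfrac{1}{2}(a_\Omega(\bm{p}_0) - a_\Phi(\bm{p}_0)) < 0$ and $\lambda(\bm{q}) < -\tfrac{1}{2}(a_\Omega(\bm{p}_0) - a_\Phi(\bm{p}_0))$ on the neighborhood. The left-hand side $\lambda(\bm{q})\bigl(\lambda(\bm{q}) + a_\Omega(\bm{p}) - a_\Phi(\bm{p})\bigr)$ of \eqref{up} is then non-positive, while $\|\varphi_{\bm{q}}\|_{\mathcal{K}}^2 > 0$ strictly because the $j = l$ summand in \eqref{phiq} is nonzero by Assumption \ref{ratio}(b) together with $q_l \neq 0$. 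For the right inequality of \eqref{up}, the preparatory observation provides a uniform positive lower bound for $(1 + a_\Omega(\bm{p}))\bigl((1 + a_\Phi(\bm{p}))^2 - \lambda(\bm{q})^2\bigr)/(1 + a_\Phi(\bm{p}))$ on the neighborhood, while $\|\varphi_{\bm{q}}\|^2 \to 0$, so the inequality holds after shrinking $\delta$ further. Theorem \ref{mainth}(1) then delivers $\lambda_\star \in \mathbb{T}_-$, and the chain above concludes the proof. Case (2) is the mirror argument with $\mathbb{T}_+$, \eqref{down}, and $1 - a_\bullet$ replacing $\mathbb{T}_-$, \eqref{up}, and $1 + a_\bullet$.
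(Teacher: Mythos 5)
Your proposal is correct and follows essentially the same route as the paper's proof: one verifies \eqref{up} (resp.\ \eqref{down}) on a small enough neighborhood of $(\bm{p}_0,\bm{0})$ in $D_l$ by continuity --- the left-hand side becomes negative while $\|\varphi_{\bm{q}}\|^2>0$, and the right-hand side tends to $(1\pm a_{\Omega}(\bm{p}_0))(1\pm a_{\Phi}(\bm{p}_0))>0$ while $\|\varphi_{\bm{q}}\|^2\to 0$ --- and then chains Theorems \ref{mainth}, \ref{f0} and \ref{thm_SM}. Your ``preparatory observation'' that Assumption \ref{ratio} alone forces $a_{\Omega}(\bm{p}_0),a_{\Phi}(\bm{p}_0)\in(-1,1)$ is exactly the role played by Lemma \ref{neqpm1} in the paper (and is in fact a slightly more careful justification, since that lemma is stated for $(\bm{p},\bm{q})\in D_l$ while $\bm{p}_0\in\{-1,1\}^d$ sits on the boundary; the argument goes through because Assumption \ref{ratio} forces $\phi_{l,1},\phi_{l,2},\omega_{l,1},\omega_{l,2}$ all nonzero), so no new ingredient is involved.
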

\begin{proof}
Observe that 
$\varphi_{\bm{q}}\neq 0$ and $\lambda({\bm{q}})>0$
whenever $(\bm{p}, {\bm q}) \in D_l$. 
By continuity, we have
\begin{align}\label{4th}
\lim_{{\bm q} \to {\bm 0}} \|\varphi({\bm{q}})\| =   
\lim_{{\bm q} \to {\bm 0}}  \lambda({\bm{q}}) = 0, 
\
\lim_{{\bm p} \to {\bm p}_0} a_{\Omega}(\bm{p})
	=  a_{\Omega}(\bm{p}_0), 
\ \mbox{and} \
\lim_{{\bm p} \to {\bm p}_0} a_{\Phi}(\bm{p})
	=  a_{\Phi}(\bm{p}_0). 
\end{align}
Suppose that $a_{\Omega}(\bm{p}_0)<a_{\Phi}(\bm{p}_0)$. 
Then, 
there exists $\delta_0 > 0$ such that 
if $(\bm{p}, \bm{q})\in D_l$ and $\|(\bm{p},\bm{q})-(\bm{p}_0, \bm{0})\|_{\mathbb{R}^d\times\mathbb{C}^d}<\delta_0$, then $\lambda({\bm{q}})(\lambda({\bm{q}})
	+a_{\Omega}(\bm{p})-a_{\Phi}(\bm{p}))<0$. 
By Lemma \ref{neqpm1} and (\ref{4th}), 
$$
	\lim_{(\bm{p},\bm{q})\to(\bm{p}_0, \bm{0})}
	(1+a_{\Omega}(\bm{p}))
	\frac{(1+a_{\Phi}(\bm{p}))^2-\lambda({\bm{q}})^2}{1+a_{\Phi}(\bm{p})}
	=(1+a_{\Omega}(\bm{p}_0))(1+a_{\Phi}(\bm{p}_0)) > 0.
$$
Hence (\ref{up}) holds if $({\bm p}, {\bm q}) \in D_l$
satisfies $\|(\bm{p},\bm{q})-(\bm{p}_0, \bm{0})\|_{\mathbb{R}^d\times\mathbb{C}^d}<\delta$
with some $\delta > 0$. 
Similarly, 
$a_{\Omega}(\bm{p}_0)>a_{\Phi}(\bm{p}_0)$
concludes that (\ref{down}) holds.
Applying Theorems \ref{thm_SM} and \ref{mainth} completes the proof.
\end{proof}
\begin{remark}
Theorem \ref{MThm} 
has demonstrated the existence of eigenvalues of $U_{\rm I}$
for sufficiently small ${\bm q}$. 
It would be interesting to study
the existence of eigenvalues of $U_{\rm I}$ without such a condition.  
\end{remark}

\begin{example}
Let $d=2$ and set 
$$
	\Phi:=\frac{1}{\sqrt{2}}
	\begin{pmatrix}
	1\\
	1\\
	0\\
	0
	\end{pmatrix}, \quad 
	\Omega:=\frac{1}{2}
	\begin{pmatrix}
	1\\
	-1\\
	\sqrt{2}\\
	0
	\end{pmatrix}, \quad
	\bm{p}_0=(1,1).
$$
Then,
$
	\frac{1}{2}=a_{\Omega}(\bm{p}_0)
	> a_{\Phi}(\bm{p}_0) = 0
$
and all assumptions in Theorem \ref{MThm} are satisfied
with $l=1$. 
Hence $U$ has two eigenvalues 
if $(\bm{p, q}) \in D_1$ and
$\|(\bm{p, q})  - (\bm{p}_0, \bm{0})\|$ is sufficiently small. 
More precisely, $g_\pm(\lambda_\star) \in \sigma_{\rm p}(U_{\rm I})$
with some $\lambda_\star \in \mathbb{T}_+$
if ${\bm p}$ satisfies  
\begin{align}\label{excon}
p_2<\frac{5}{2}-\frac{1}{2p_1^2}, \quad 
1<p_1^2+\frac{4}{9}p_2^2.
\end{align}
{%
This is because, in this case, \eqref{excon} is equivalent 
to \eqref{down} in Theorem \ref{mainth}.   
}%
\end{example}


\section{Feshbach map}
\subsection{Definition of the Feshbach map}
In this subsection, we define
the Feshbach map of the discriminant operator $T$. 
Recall that $\Pi = |\mathbbm{1}_{\{\bm{0}\}}\rangle\langle \mathbbm{1}_{\{\bm{0}\}}|$ and 
	$\Pi^{\perp} = I_{\mathcal{K}}- \Pi$. 
	Let $\lambda \in \mathbb{C}$ and 
	$(\Pi (T-\lambda)\Pi)_{\Ran\Pi}$ be a following operator on 
	${\Ran}\Pi$: 
	$$
		(\Pi (T-\lambda)\Pi)_{{\Ran}\Pi} : 
		\Ran\Pi \ni f \mapsto (\Pi (T-\lambda)\Pi) f \in 
		\Ran\Pi.
	$$

\begin{lemma}\label{neqlam}
	The following conditions are equivalent: 
	\begin{enumerate}
		\item 
			$\lambda \neq a_{\Omega}(\bm{p})$;
		\item 
			There exists an inverse operator of 
			$(\Pi (T-\lambda)\Pi)_{{\Ran}\Pi}$.
	\end{enumerate}
	In this case, 
	\begin{align}\label{ranpi}
	(\Pi (T-\lambda)\Pi)_{{\Ran}\Pi}^{-1}
		=\frac{1}{a_{\Omega}(\bm{p})-\lambda}I_{{\Ran}\Pi},
	\end{align}
	where $I_{{\Ran}\Pi}$ is an identity map on ${\Ran}\Pi$.
\end{lemma}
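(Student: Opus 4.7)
The plan is to reduce the statement to a one-line computation by exploiting that $\Ran\Pi = \mathbb{C}\,\mathbbm{1}_{\{\bm 0\}}$ is one-dimensional. On this line, $(\Pi(T-\lambda)\Pi)_{\Ran\Pi}$ is simply multiplication by a scalar, so invertibility and the explicit inverse formula are controlled by a single number, which the lemma identifies as $a_{\Omega}(\bm p)-\lambda$.

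Concretely, I would first compute the matrix element $\langle \mathbbm{1}_{\{\bm 0\}}, T\,\mathbbm{1}_{\{\bm 0\}}\rangle_{\mathcal K}$ using the formula \eqref{eq_repT} from Lemma \ref{repT}. The multiplication term $a(\bm p,\cdot)$ contributes $a(\bm p,\bm 0)=a_{\Omega}(\bm p)$ by \eqref{apx}. For each hopping term $q_j\chi_{j,1}^{\ast}L_j\chi_{j,2}$, applying it to $\mathbbm{1}_{\{\bm 0\}}$ produces a scalar multiple of $\mathbbm{1}_{\{-\bm e_j\}}$, and its adjoint produces a scalar multiple of $\mathbbm{1}_{\{\bm e_j\}}$. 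Both are orthogonal to $\mathbbm{1}_{\{\bm 0\}}$, so they contribute nothing to the matrix element. Hence $\Pi T\Pi = a_{\Omega}(\bm p)\,\Pi$, and consequently $\Pi(T-\lambda)\Pi = (a_{\Omega}(\bm p)-\lambda)\Pi$.

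From this identity both directions follow at once: viewed as an operator on $\Ran\Pi$, $(\Pi(T-\lambda)\Pi)_{\Ran\Pi}$ is the scalar operator $(a_{\Omega}(\bm p)-\lambda)\,I_{\Ran\Pi}$, which is invertible if and only if $a_{\Omega}(\bm p)-\lambda\neq 0$, and in that case its inverse is $(a_{\Omega}(\bm p)-\lambda)^{-1}\,I_{\Ran\Pi}$, giving exactly \eqref{ranpi}. There is no genuine obstacle here; the only point requiring a moment of care is the vanishing of the hopping contributions after sandwiching by $\Pi$, which is just the elementary observation that $L_j\mathbbm{1}_{\{\bm 0\}}$ and $L_j^{\ast}\mathbbm{1}_{\{\bm 0\}}$ have supports disjoint from $\{\bm 0\}$, so any operator of the form (multiplication)$\,\cdot\, L_j^{(\ast)}\,\cdot\,$(multiplication) applied to $\mathbbm{1}_{\{\bm 0\}}$ yields a vector orthogonal to $\mathbbm{1}_{\{\bm 0\}}$.
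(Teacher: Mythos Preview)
Your proposal is correct and follows essentially the same route as the paper's own proof: both compute $\Pi T\Pi$ by noting that the diagonal potential term contributes $a_{\Omega}(\bm p)$ while the shift terms $q_j\chi_{j,1}^{\ast}L_j\chi_{j,2}$ and their adjoints have vanishing $\langle \mathbbm{1}_{\{\bm 0\}},\,\cdot\,\mathbbm{1}_{\{\bm 0\}}\rangle$ matrix element, yielding $\Pi(T-\lambda)\Pi=(a_{\Omega}(\bm p)-\lambda)\Pi$ and hence the equivalence and the inverse formula. Your write-up simply spells out the support reasoning for the vanishing a bit more explicitly than the paper does.
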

\begin{proof}
Simple calculation show that
$\langle \mathbbm{1}_{\{\bm{0}\}}, 
\chi_{j,1}^{\ast}L_j\chi_{j,2}\mathbbm{1}_{\{\bm{0}\}}\rangle =0$
for all $j\in \{1,\cdots,d\}$
and
$
	\Pi (T-\lambda)\Pi = (a_{\Omega}(\bm{p})-\lambda)\Pi
$ for all $\lambda\in\mathbb{C}$. 
Hence,
$$
	(\Pi (T-\lambda)\Pi)_{{\Ran}\Pi}=(a_{\Omega}(\bm{p})-\lambda)
	I_{{\Ran}\Pi}\quad
	\text{for all}\ 
	\lambda\in\mathbb{C}.
$$
Therefore, (1) and (2) are equivalent and (\ref{ranpi}) holds 
for $\lambda \neq a_{\Omega}(\bm{p})$.
\end{proof}
Lemma \ref{neqlam} guarantees that
the operator $$
	F(\lambda) = 
	\Pi^{\perp}(T-\lambda)\Pi^{\perp}-\Pi^{\perp}T\Pi 
	(\Pi (T-\lambda)\Pi)_{{\Ran}\Pi}^{-1} 
	\Pi T \Pi^{\perp}
$$
is well-defined 
whenever $\lambda\in\mathbb{C}\setminus\{a_{\Omega}(\bm{p})\}$. 
$F(\lambda)$ is called the Feshbach map of $T$. 
The following proposition 
reveals an isospectral property of the Feshbach map.  

\begin{proposition}\label{feshbach}
Let $\lambda\in\mathbb{C}\setminus\{a_{\Omega}(\bm{p})\}$. 
Then, the following are equivalent:
\begin{itemize}
\item[(1)] $\lambda \in \sigma_{\rm p}(T)$;
\item[(2)] $\ker F(\lambda)$ is non trivial. 	
\end{itemize}
In this case, $\dim \ker (T-\lambda) = \dim \ker F(\lambda)$. 
\end{proposition}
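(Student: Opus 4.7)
The plan is to establish a linear bijection between $\ker(T-\lambda)$ and the nontrivial part of $\ker F(\lambda)$ via the orthogonal decomposition $\mathcal{K} = \Ran\Pi \oplus \Ran\Pi^{\perp}$. Any $\psi \in \mathcal{K}$ is uniquely written as $\psi = \Pi\psi + \Pi^{\perp}\psi$, and applying $\Pi$ and $\Pi^{\perp}$ to the eigenvalue equation $(T-\lambda)\psi = 0$ yields the coupled system
\begin{align*}
\Pi(T-\lambda)\Pi\,\psi + \Pi T \Pi^{\perp}\psi &= 0, \\
\Pi^{\perp} T \Pi\,\psi + \Pi^{\perp}(T-\lambda)\Pi^{\perp}\psi &= 0.
\end{align*}
By Lemma \ref{neqlam}, the hypothesis $\lambda \neq a_{\Omega}({\bm p})$ makes $(\Pi(T-\lambda)\Pi)_{\Ran\Pi}$ invertible, so the first equation can be solved for the $\Pi$-component: $\Pi\psi = -(\Pi(T-\lambda)\Pi)_{\Ran\Pi}^{-1}\Pi T \Pi^{\perp}\psi$. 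Substituting into the second equation gives $F(\lambda)(\Pi^{\perp}\psi) = 0$, so the assignment $\psi \mapsto \Pi^{\perp}\psi$ sends $\ker(T-\lambda)$ into $\ker F(\lambda) \cap \Ran\Pi^{\perp}$.

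Injectivity of this assignment follows from the same calculation: if $\Pi^{\perp}\psi = 0$ then $\psi \in \Ran\Pi$ and $\Pi(T-\lambda)\Pi\psi = (a_{\Omega}({\bm p})-\lambda)\psi = 0$, forcing $\psi = 0$. For the converse, given any $\phi \in \ker F(\lambda) \cap \Ran\Pi^{\perp}$ I would reconstruct a preimage by setting $\psi = \phi - (\Pi(T-\lambda)\Pi)_{\Ran\Pi}^{-1}\Pi T \phi$ and verify $(T-\lambda)\psi = 0$ by splitting into $\Pi$- and $\Pi^{\perp}$-components: the $\Pi$-component vanishes by the construction (it reproduces exactly the first line of the coupled system), while the $\Pi^{\perp}$-component rearranges to $F(\lambda)\phi = 0$. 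These two maps are mutual inverses, which simultaneously yields the equivalence (1)$\Leftrightarrow$(2) and the dimension identity $\dim\ker(T-\lambda) = \dim\ker F(\lambda)$.

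The main obstacle here is not technical depth but bookkeeping: one must keep track of whether $(\Pi(T-\lambda)\Pi)^{-1}$ lives on $\Ran\Pi$ or on $\mathcal{K}$, and one must interpret "$\ker F(\lambda)$ is non-trivial" in the statement as possessing a nonzero element in $\Ran\Pi^{\perp}$ (otherwise $\ker F(\lambda) \supset \Ran\Pi$ holds trivially, since $F(\lambda) = \Pi^{\perp}(\cdots)\Pi^{\perp}$). With that convention, the argument is a direct Schur-complement manipulation that uses only the invertibility guaranteed by Lemma \ref{neqlam}, together with the fact (also from that lemma) that $\Pi T \Pi$ reduces to a scalar multiple of $\Pi$, which is what makes the substitution step clean.
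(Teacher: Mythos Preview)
Your argument is correct and is precisely the standard Schur-complement (Feshbach) computation; the paper itself does not supply a proof but simply refers the reader to \cite{BFS}, so you have written out what that citation would contain. Your remark that ``$\ker F(\lambda)$ non-trivial'' must be read as having a nonzero element in $\Ran\Pi^{\perp}$ is exactly right and is the only subtlety worth flagging.
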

\begin{proof}
See \cite{BFS}.
\end{proof}

\begin{proposition}\label{Fh}
Let $\lambda \in\mathbb{C}\setminus\{ a_{\Omega}(\bm{p})\}$.
Then, $F(\lambda)$ is written as 
	$$
		F(\lambda) = \Pi^{\perp}\left(
		T_0-\lambda -\frac{1}{a_{\Omega}(\bm{p})-\lambda}|\varphi_{\bm{q}} \rangle 
	\langle \varphi_{\bm{q}} |
	\right)
	\Pi^{\perp}.
	$$
\end{proposition}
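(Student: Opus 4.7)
The plan is to unpack the definition of $F(\lambda)$ in two independent steps: first replace the off-diagonal block $\Pi^{\perp}T\Pi(\Pi(T-\lambda)\Pi)_{\Ran\Pi}^{-1}\Pi T\Pi^{\perp}$ by the rank-one form involving $\varphi_{\bm q}$, and second show that the diagonal block $\Pi^{\perp}(T-\lambda)\Pi^{\perp}$ coincides with $\Pi^{\perp}(T_0-\lambda)\Pi^{\perp}$. By Lemma \ref{neqlam} the middle resolvent is just the scalar $(a_\Omega({\bm p})-\lambda)^{-1}$ on $\Ran\Pi$, so the off-diagonal contribution reduces at once to
\[
	\frac{1}{a_{\Omega}(\bm{p})-\lambda}\,\Pi^{\perp}T\Pi T\Pi^{\perp}
	= \frac{1}{a_{\Omega}(\bm{p})-\lambda}\,|\Pi^{\perp}T\mathbbm{1}_{\{\bm{0}\}}\rangle\langle \Pi^{\perp}T\mathbbm{1}_{\{\bm{0}\}}|,
\]
using $\Pi=|\mathbbm{1}_{\{\bm{0}\}}\rangle\langle\mathbbm{1}_{\{\bm{0}\}}|$ and the self-adjointness of $T$.

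The first key computation is then to evaluate $T\mathbbm{1}_{\{\bm{0}\}}$ via the explicit expression from Lemma \ref{repT}. The diagonal part gives $a(\bm{p},\bm{0})\mathbbm{1}_{\{\bm{0}\}}=a_{\Omega}(\bm{p})\mathbbm{1}_{\{\bm{0}\}}$; each $q_j\chi_{j,1}^{\ast}L_j\chi_{j,2}$ applied to $\mathbbm{1}_{\{\bm{0}\}}$ is supported on $\{-\bm{e}_j\}$ with value $q_j\phi_{j,1}^{\ast}\omega_{j,2}$, and the adjoint shift term contributes $q_j^{\ast}\phi_{j,2}^{\ast}\omega_{j,1}$ at $\bm{e}_j$. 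Recognizing these as precisely the coefficients in the definition of $\varphi_{\bm q}$ yields $T\mathbbm{1}_{\{\bm{0}\}}=a_{\Omega}(\bm{p})\mathbbm{1}_{\{\bm{0}\}}+\varphi_{\bm q}$. Since $\varphi_{\bm q}$ is supported on $\{\pm\bm{e}_j\}_{j=1}^d$ and hence vanishes at the origin, $\Pi^{\perp}\varphi_{\bm q}=\varphi_{\bm q}$, giving $\Pi^{\perp}T\mathbbm{1}_{\{\bm{0}\}}=\varphi_{\bm q}$ and the desired rank-one operator $|\varphi_{\bm q}\rangle\langle\varphi_{\bm q}|$.

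The second, and in my view main, step is to replace $T$ by $T_0$ in the block $\Pi^{\perp}(T-\lambda)\Pi^{\perp}$. Write $W=T-T_0$ as in the proof of Lemma \ref{sT0}: the diagonal part $(a_{\Omega}(\bm{p})-a_{\Phi}(\bm{p}))\mathbbm{1}_{\{\bm{0}\}}$ contributes only at $\bm{0}$, while the shift parts coming from $q_j(\chi_{j,1}^\ast(\bm{x})\chi_{j,2}(\bm{x}+\bm{e}_j)-\phi_{j,1}^\ast\phi_{j,2})L_j$ (and adjoint) vanish unless the point $\bm{x}$ or its shift equals $\bm{0}$. Concretely, for any $f\in\Ran\Pi^{\perp}$ one has $f(\bm{0})=0$, and a direct check shows $(Wf)(\pm\bm{e}_j)$ is a multiple of $f(\bm{0})$ and therefore vanishes; only $(Wf)(\bm{0})$ can survive. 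Hence $Wf\in\Ran\Pi$, so $\Pi^{\perp}Wf=0$ and $\Pi^{\perp}W\Pi^{\perp}=0$, giving $\Pi^{\perp}T\Pi^{\perp}=\Pi^{\perp}T_0\Pi^{\perp}$.

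Combining both steps and using $\Pi^{\perp}\varphi_{\bm q}=\varphi_{\bm q}$ to pull $\Pi^{\perp}$ through the rank-one term yields the stated formula. The main obstacle is the localization argument in the second step: one must see that the differences between the variable-coefficient shifts in $T$ and the constant-coefficient shifts in $T_0$ contribute only through values of $f$ at $\bm{0}$, which is exactly where $\Pi^{\perp}$ kills them. Everything else is a routine reorganization of the block decomposition.
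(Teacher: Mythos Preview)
Your proof is correct and follows essentially the same two-step structure as the paper's: identify the off-diagonal block $\Pi^{\perp}T\Pi$ with $|\varphi_{\bm q}\rangle\langle\mathbbm{1}_{\{\bm 0\}}|$ (the paper states the adjoint relation $\Pi T\Pi^{\perp}=|\mathbbm{1}_{\{\bm 0\}}\rangle\langle\varphi_{\bm q}|$ and then squares it), and then show $\Pi^{\perp}T\Pi^{\perp}=\Pi^{\perp}T_0\Pi^{\perp}$. The only cosmetic difference is in the second step: the paper verifies $\Pi^{\perp}a(\bm p,\cdot)\Pi^{\perp}=\Pi^{\perp}a_{\Phi}(\bm p)\Pi^{\perp}$ and $\Pi^{\perp}\chi_{j,1}^{\ast}L_j\chi_{j,2}\Pi^{\perp}=\Pi^{\perp}\phi_{j,1}^{\ast}\phi_{j,2}L_j\Pi^{\perp}$ term by term via an explicit $|\mathbbm{1}_{\{\bm x\}}\rangle\langle\mathbbm{1}_{\{\bm y\}}|$ expansion, whereas you reuse the decomposition $W=T-T_0$ from the proof of Lemma~\ref{sT0} and argue that $\Pi^{\perp}W\Pi^{\perp}=0$ because the nonvanishing entries of $Wf$ at $\pm\bm e_j$ are multiples of $f(\bm 0)$; both arguments exploit the same one-defect structure and are equivalent.
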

\begin{proof}
A simple calculation yields
$\Pi T \Pi^{\perp} 
= |\mathbbm{1}_{\{\bm{0}\}}\rangle \langle \varphi_{\bm{q}} |$.
By definition,
\begin{align*}
	F(\lambda) 
& =
	\Pi^{\perp}
	\left\{
		T-\lambda-\frac{1}{a_{\Omega}(\bm{p})-\lambda}
		(\Pi T\Pi^{\perp})^{\ast}(\Pi T\Pi^{\perp})
	\right\}
	\Pi^{\perp} \\
& =
	\Pi^{\perp}
	\left(
		T-\lambda -\frac{1}{a_{\Omega}(\bm{p})-\lambda}
		|\varphi_{\bm{q}}\rangle \langle\varphi_{\bm{q}} |
	\right)
	\Pi^{\perp}.
\end{align*}
It suffices to show 
$\Pi^{\perp}T\Pi^{\perp}=\Pi^{\perp}T_0\Pi^{\perp}$.
By Lemma \ref{repT}, 
\begin{align}\label{perpT}
	\Pi^{\perp}T\Pi^{\perp}=
	\Pi^{\perp}a(\bm{p},\cdot)\Pi^{\perp}+
	\sum_{j=1}^d
	\left\{
		q_j\Pi^{\perp}\chi_{j,1}^{\ast}L_j\chi_{j,2}\Pi^{\perp} + 
		(q_j\Pi^{\perp}\chi_{j,1}^{\ast}L_j\chi_{j,2}\Pi^{\perp})^{\ast}
	\right\}.
\end{align}
The first term of the right-hand side of (\ref{perpT}) is calculated as
\begin{align*}
	\Pi^{\perp}a(\bm{p},\cdot)\Pi^{\perp}
	&=
	\sum_{j=1}^dp_j\Pi^{\perp}(|\chi_{j,1}|^2-
	|\chi_{j,2}|^2)\Pi^{\perp}\\
	&=
	\sum_{j=1}^dp_j
	\sum_{\bm{x}\neq \bm{0}}
	\sum_{\bm{y}\neq \bm{0}}
	|\mathbbm{1}_{\{\bm{x}\}}\rangle\langle\mathbbm{1}_{\{\bm{x}\}}, 
	(|\chi_{j,1}(\bm{x})|^2-
	|\chi_{j,2}(\bm{x})|^2)
	\mathbbm{1}_{\{\bm{y}\}}\rangle
	\langle \mathbbm{1}_{\{\bm{y}\}}|\\
	&=
	\sum_{j=1}^dp_j
	\sum_{\bm{x}\neq \bm{0}}
	\sum_{\bm{y}\neq \bm{0}}
	|\mathbbm{1}_{\{\bm{x}\}}\rangle\langle\mathbbm{1}_{\{\bm{x}\}}, 
	(|\phi_{j,1}|^2-
	|\phi_{j,2}|^2)
	\mathbbm{1}_{\{\bm{y}\}}\rangle
	\langle \mathbbm{1}_{\{\bm{y}\}}|\\
	&=
	\Pi^{\perp}a_{\Phi}(\bm{p})\Pi^{\perp}.
\end{align*}
On the other hand, 
\begin{align*}
	\Pi^{\perp}\chi_{j,1}^{\ast}L_j\chi_{j,2}\Pi^{\perp}
	&=
	\sum_{\bm{x}\neq \bm{0}}
	\sum_{\bm{y}\neq \bm{0}}
	|\mathbbm{1}_{\{\bm{x}\}}\rangle\langle\mathbbm{1}_{\{\bm{x}\}}, 
	\chi_{j,1}^{\ast}L_j\chi_{j,2}
	\mathbbm{1}_{\{\bm{y}\}}\rangle
	\langle \mathbbm{1}_{\{\bm{y}\}}|\\
	&=
	\sum_{\bm{x}\neq \bm{0}}
	\sum_{\bm{y}\neq \bm{0}}
	(\chi_{j,1}^{\ast}L_j\chi_{j,2}
	\mathbbm{1}_{\{\bm{y}\}})(\bm{x})
	|\mathbbm{1}_{\{\bm{x}\}}\rangle
	\langle \mathbbm{1}_{\{\bm{y}\}}|\\
	&=
	\sum_{\bm{x}\neq \bm{0}}
	\sum_{\bm{y}\neq \bm{0}}
	\chi_{j,1}^{\ast}(\bm{x})
	\chi_{j,2}(\bm{x}+\bm{e}_j)
	\mathbbm{1}_{\{\bm{y}\}}(\bm{x}+\bm{e}_j)
	|\mathbbm{1}_{\{\bm{x}\}}\rangle
	\langle \mathbbm{1}_{\{\bm{y}\}}|\\	
	&=
	\sum_{\bm{x}\neq \bm{0}}
	\phi_{j,1}^{\ast}
	\phi_{j,2}
	|\mathbbm{1}_{\{\bm{x}\}}\rangle
	\langle \mathbbm{1}_{\{\bm{x}+\bm{e}_j\}}|\\
	&=
	\Pi^{\perp}\phi_{j,1}^{\ast}
	\phi_{j,2}L_j\Pi^{\perp}.
\end{align*}
Hence, $\Pi^{\perp}T\Pi^{\perp}=\Pi^{\perp}T_0\Pi^{\perp}$.
This completes the proof. 
\end{proof}

\begin{remark}
In the proof of Proposition \ref{Fh}, one defect condition plays an essential role. 
If the coin has two or more defect, then we can not conclude 
$\Pi^{\perp}\chi_{j,1}^{\ast}L_j\chi_{j,2}\Pi^{\perp}
=\Pi^{\perp}\phi_{j,1}^{\ast}
	\phi_{j,2}L_j\Pi^{\perp}$.
\end{remark}
\subsection{Non-triviality of the kernel of $F(\lambda)$}
The following proposition ensures the non-triviality of $\ker F(\lambda)$
in the proof of Theorem \ref{f0}. 
Recall that $\psi_{\lambda} 
	= (T_0-\lambda)^{-1}\varphi_{\bm{q}}\in\mathcal{K}$.
\begin{proposition}\label{psilambda}
Suppose that Assumption \ref{ratio} holds
and $({\bm p}, {\bm q}) \in D_l$. 
Then, 
$\psi_{\lambda}\in\Ran\Pi^{\perp}\setminus\{0\}$ for all 
	$\lambda\in [-1,1]\setminus \sigma(T_0)$.	
\end{proposition}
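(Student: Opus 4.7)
The plan is to separate the proposition into two claims: that $\psi_\lambda \not\equiv 0$, which is essentially automatic, and that $\psi_\lambda(\bm{0}) = 0$, which is the real content and exploits the bilinear condition Assumption \ref{ratio}(a).

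For nontriviality, since $\lambda \notin \sigma(T_0)$ the operator $T_0-\lambda$ is boundedly invertible on $\mathcal{K}$, so $\psi_\lambda = 0$ if and only if $\varphi_{\bm q} = 0$. The rewriting \eqref{phiq} shows that the coefficient of $\mathbbm{1}_{\{-\bm{e}_l\}}$ in $\varphi_{\bm q}$ equals $q_l\langle \Phi_l, \sigma_+\Omega_l\rangle_{\mathbb{C}^2}$, which is nonzero by Assumption \ref{ratio}(b) together with $p_lq_l\neq 0$; hence $\varphi_{\bm q}\not\equiv 0$ and $\psi_\lambda\not\equiv 0$.

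The substantive step is to verify $\psi_\lambda(\bm{0})=0$, and I would do this on the Fourier side via the transform $\mathcal{F}$ introduced in the proof of Lemma \ref{sT0}. From \eqref{range} and the definition of $\varphi_{\bm q}$,
\[
\psi_\lambda(\bm{0}) = \int_{[0,2\pi]^d}\frac{\hat\varphi_{\bm q}(\bm k)}{\hat{T_0}(\bm k)-\lambda}\frac{d\bm k}{(2\pi)^d},\qquad \hat\varphi_{\bm q}(\bm k) = \sum_{j=1}^d (A_j e^{ik_j} + B_j e^{-ik_j}),
\]
with $A_j := q_j\phi_{j,1}^{\ast}\omega_{j,2}$, $B_j := q_j^{\ast}\phi_{j,2}^{\ast}\omega_{j,1}$, and $\hat{T_0}(\bm k)=a_\Phi(\bm p)+2\sum_j|\alpha_j|\cos(k_j+\theta_j)$. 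For each $j$ with $\alpha_j\neq 0$ I would perform the measure-preserving translation $k_j\mapsto k_j-\theta_j$. This turns $\hat{T_0}$ into an even function of each such $k_j$, and the $j$-term of $\hat\varphi_{\bm q}$ into $A_j e^{-i\theta_j}e^{ik_j} + B_j e^{i\theta_j}e^{-ik_j}$, whose $\cos k_j$-coefficient is
\[
A_j e^{-i\theta_j} + B_j e^{i\theta_j} = \frac{A_j\alpha_j^{\ast} + B_j\alpha_j}{|\alpha_j|} = \frac{|q_j|^2 \phi_{j,1}^{\ast}\phi_{j,2}^{\ast}}{|\alpha_j|}\bigl(\phi_{j,1}\omega_{j,2} + \phi_{j,2}\omega_{j,1}\bigr) = 0
\]
by Assumption \ref{ratio}(a). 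Thus, after the shifts, $\hat\varphi_{\bm q}$ becomes a sum of $\sin k_j$ terms; the integrand is odd in each such $k_j$ while the denominator is even, and the integral over the symmetric torus vanishes.

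The only bookkeeping point, which I anticipate to be the single real obstacle, is the degenerate possibility $\alpha_j = 0$ (where no translation is available and Lemma \ref{sT0}'s proof sets $\theta_j:=0$). In this case $\alpha_j = q_j\phi_{j,1}^{\ast}\phi_{j,2} = 0$ combined with $\phi_{j,1}\omega_{j,2}+\phi_{j,2}\omega_{j,1}=0$ quickly forces $A_j = B_j = 0$ (checking separately the subcases $q_j=0$, $\phi_{j,1}=0$, $\phi_{j,2}=0$), so such $j$ contribute nothing to $\hat\varphi_{\bm q}$ and the integrand simply does not depend on $k_j$; the symmetry argument then operates on the remaining variables and the conclusion $\psi_\lambda(\bm{0})=0$ stands.
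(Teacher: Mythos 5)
Your proposal is correct and follows essentially the same route as the paper's proof: pass to the Fourier side, translate each $k_j$ by the argument $\theta_j$ of $\alpha_j$ so that $\hat{T_0}$ becomes even, observe that Assumption \ref{ratio}(a) kills the resulting $\cos k_j$-coefficients (your identity $A_j\alpha_j^{\ast}+B_j\alpha_j = |q_j|^2\phi_{j,1}^{\ast}\phi_{j,2}^{\ast}\,\Phi_j\cdot(\sigma_1\Omega_j)$ is exactly the paper's computation), and dispose of the $\sin k_j$ parts by parity. Your separate treatment of the degenerate indices with $\alpha_j=0$ matches the paper's case split over the set of $j$ with $q_j\langle\Phi_j,\sigma_+\Omega_j\rangle\neq 0$, so there is nothing to add.
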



To prove Proposition \ref{psilambda},
we need the following lemma. 
\begin{lemma}\label{psiranpi}
	The following are equivalent:
	\begin{enumerate}
	\item 
		$\psi_{\lambda}\in \Ran\Pi^{\perp}$;
	\item 
		$\displaystyle{\int_{[0,2\pi]^d}\hat{\psi_{\lambda}}(\bm{k})
		\frac{d\bm{k}}{(2\pi)^d}}=0$.
	\end{enumerate}
\end{lemma}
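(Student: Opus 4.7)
The statement is an immediate consequence of unpacking the definition of $\Ran \Pi^{\perp}$ and applying the Fourier inversion formula, so the plan is very short. First I would observe that since $\Pi = |\mathbbm{1}_{\{\bm 0\}}\rangle\langle \mathbbm{1}_{\{\bm 0\}}|$ is the rank-one projection onto $\mathbb{C}\,\mathbbm{1}_{\{\bm 0\}}$, a vector $f \in \mathcal{K}$ satisfies $f \in \Ran \Pi^{\perp}$ if and only if $\Pi f = 0$, which in turn is equivalent to
\[
\langle \mathbbm{1}_{\{\bm 0\}}, f \rangle_{\mathcal{K}} = 0,
\quad \text{i.e.,} \quad f(\bm 0) = 0.
\]
Thus the lemma reduces to the identity $\psi_{\lambda}(\bm 0) = \int_{[0,2\pi]^d} \hat{\psi_{\lambda}}(\bm k)\, d\bm k/(2\pi)^d$.

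Next I would invoke the Fourier inversion formula for $\mathcal{F}:\mathcal{K}\to L^{2}([0,2\pi]^d,d\bm k/(2\pi)^d)$ that was introduced in the proof of Lemma \ref{sT0}. Evaluating the inverse transform of $\hat{\psi_{\lambda}}$ at $\bm x = \bm 0$ gives exactly
\[
\psi_{\lambda}(\bm 0) = \int_{[0,2\pi]^d} e^{i\bm k\cdot \bm 0}\,\hat{\psi_{\lambda}}(\bm k)\,\frac{d\bm k}{(2\pi)^d} = \int_{[0,2\pi]^d}\hat{\psi_{\lambda}}(\bm k)\,\frac{d\bm k}{(2\pi)^d}.
\]
Combining the two steps yields the desired equivalence.

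There is no real obstacle here: the argument is almost entirely notational. The only point worth stating explicitly is the characterization $\Ran \Pi^{\perp} = \{f \in \mathcal{K} : f(\bm 0) = 0\}$, after which Fourier inversion finishes the job. The same argument of course works without any assumption on $\lambda$ or on $\psi_{\lambda}$ specifically; the lemma is really a statement about arbitrary elements of $\mathcal{K}$, and it is only stated for $\psi_{\lambda}$ because that is the form in which it will be applied in Proposition \ref{psilambda}.
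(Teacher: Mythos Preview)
Your proof is correct and follows essentially the same idea as the paper's: both reduce the condition $\psi_\lambda\in\Ran\Pi^\perp$ to the vanishing of $\langle \mathbbm{1}_{\{\bm 0\}},\psi_\lambda\rangle$ and then identify this inner product with the integral $\int_{[0,2\pi]^d}\hat{\psi_\lambda}(\bm k)\,d\bm k/(2\pi)^d$ via $\mathcal{F}\mathbbm{1}_{\{\bm 0\}}=\mathbbm{1}_{[0,2\pi]^d}$. The paper phrases this by conjugating $\Pi^\perp$ through $\mathcal{F}$, whereas you go directly through $\psi_\lambda(\bm 0)$ and Fourier inversion, but the content is the same.
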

\begin{proof}
Since $\Pi^{\perp}\psi_{\lambda}=\psi_{\lambda}$ is equivalent 
that 
$\mathcal{F}\Pi^{\perp}\mathcal{F}^{\ast}\hat{\psi_{\lambda}}
=\hat{\psi_{\lambda}}$, 
then (1) is equivalent the following:
\begin{align}\label{ranF}
	\hat{\psi_{\lambda}}\in \Ran \mathcal{F}\Pi^{\perp} \mathcal{F}^{\ast}.
\end{align}
Because by direct calculation, 
$\mathcal{F}\mathbbm{1}_{\{\bm{0}\}}=\mathbbm{1}_{[0,2\pi]^d}$,
the following holds:
\begin{align}
	\mathcal{F}\Pi^{\perp}\mathcal{F}^{\ast}\hat{\psi_{\lambda}}
	&=
	\mathcal{F}(I_{\mathcal{K}}-\Pi)\mathcal{F}^{\ast}\hat{\psi_{\lambda}}\notag\\
	&=
	(I_{\mathcal{FK}}-|\mathbbm{1}_{[0,2\pi]^d}\rangle
	\langle\mathbbm{1}_{[0,2\pi]^d}|)\hat{\psi_{\lambda}}\notag\\
	&=
	\hat{\psi_{\lambda}}-\left(
	\int_{[0,2\pi]^d}\hat{\psi_{\lambda}}(\bm{k})\frac{d\bm{k}}{(2\pi)^d}
	\right)
	\mathbbm{1}_{[0,2\pi]^d}.\label{psi0}
\end{align}
By (\ref{ranF}) and (\ref{psi0}), 
(1) holds if and only if
$\displaystyle 
	\left(
		\int_{[0,2\pi]^d}\hat{\psi_{\lambda}}(\bm{k})\frac{d\bm{k}}{(2\pi)^d}
	\right)
\mathbbm{1}_{[0,2\pi]^d}=0$. 
This proves the lemma. 
\end{proof}

\begin{proof}[Proof of Proposition \ref{psilambda}]
Fix $\lambda \in [-1,1] \setminus \sigma(T_0)$. 
Because $\lambda \not\in \sigma(T_0)$
and  $\varphi_{\bm q}\not=0$,
we observe that $\psi_\lambda \not=0$. 
By using (\ref{range}) and changing variables,
we have
\begin{align}
	&
	\int_{[0,2\pi]^d}\hat{\psi_{\lambda}}(\bm{k})\frac{d\bm{k}}{(2\pi)^d}
	=
	\int_{[0,2\pi]^d}\frac{\hat{\varphi_{\bm{q}}}(\bm{k})}{\hat{T_0}(\bm{k})-\lambda}
	\frac{d\bm{k}}{(2\pi)^d}\notag\\
	&=
	\int_{[0,2\pi]^d}
	\frac{\sum_{j=1}^d\left(\varphi_{\bm{q}}(\bm{e}_j)e^{-ik_j}
	+\varphi_{\bm{q}}(-\bm{e}_j)e^{ik_j}\right)}
	{a_{\Phi}(\bm{p})+2\sum_{j=1}^d|\alpha_j|\cos(k_j+\theta_j)-\lambda}
	\frac{d\bm{k}}{(2\pi)^d}\notag\\
	&=
	\int_{\prod_{j=1}^d[\theta_j,2\pi+\theta_j]}
	\frac{\sum_{j=1}^d\left(\varphi_{\bm{q}}(\bm{e}_j)e^{-i(t_j-\theta_j)}
	+\varphi_{\bm{q}}(-\bm{e}_j)e^{i(t_j-\theta_j)}\right)}
	{a_{\Phi}(\bm{p})+2\sum_{j=1}^d|\alpha_j|\cos t_j-\lambda}
	\frac{d\bm{t}}{(2\pi)^d}\notag\\
	&=
	\int_{[0,2\pi]^d}
	\frac{\sum_{j=1}^d\left(\varphi_{\bm{q}}(\bm{e}_j)e^{i\theta_j}
	+\varphi_{\bm{q}}(-\bm{e}_j)e^{-i\theta_j}\right)\cos t_j}
	{a_{\Phi}(\bm{p})+2\sum_{j=1}^d|\alpha_j|\cos t_j-\lambda}
	\frac{d\bm{t}}{(2\pi)^d}
	\label{intsum}.
\end{align}
{By \eqref{phiq}, 
we observe that
\begin{align}\label{keisuu}
	\varphi_{\bm{q}}(\bm{e}_j)e^{i\theta_j}
	+\varphi_{\bm{q}}(-\bm{e}_j)e^{-i\theta_j}
= q_j^* e^{i \theta_j} 
	\langle \Phi_j, \sigma_- \Omega_j \rangle,
	+ q_j e^{-i \theta_j} 
	\langle \Phi_j, 
		\sigma_+ \Omega_j \rangle.
\end{align}
Let $B = \{ j \mid q_j \not= 0, 
	\ \langle \Phi_j, \sigma_+ \Omega_j \rangle \not= 0\}$.
Assumption \ref{ratio} (b) and $({\bm p}, {\bm q}) \in D_l$
imply $B \not=\emptyset$. 
If $j \not\in B$, then the right-hand side (RHS) of \eqref{keisuu} is zero,
because by $\sigma_1 = \sigma_+ +  \sigma_-$, 
Assumption \ref{ratio} (a) implies that
\begin{equation} 
\label{eq4}
|\langle \Phi_j, \sigma_+ \Omega_j \rangle|
	= |\Phi_j \cdot (\sigma_+ \Omega_j)|
	= |\Phi_j \cdot (\sigma_- \Omega_j)|
	= |\langle \Phi_j, \sigma_- \Omega_j \rangle|. 
\end{equation}  
Let $j \in B$. By \eqref{eq4},
we have $\phi_{j,1}\not=0$, $\phi_{j,2} \not=0$, and hence
$\alpha_j=q_j\phi_{j,1}^{\ast}\phi_{j,2}\not=0$.
Using $e^{i\theta_j}=q_j \phi_{j,1}^*\phi_{j,2}/|\alpha_j|$,
we observe that
\begin{align*}
\mbox{RHS of \eqref{keisuu}}
& = \left\langle \Phi_j, 
	\begin{pmatrix} 0 &   q_j e^{-i \theta_j}  \\
	 q_j^* e^{i \theta_j}  & 0
	\end{pmatrix} \Omega_j \right\rangle  
= \frac{|q_j|^2}{|\alpha_j|} 
 	\left\langle 
 	\begin{pmatrix} 0 &   \phi_{j,1}\phi_{j,2}^*  \\
	 \phi_{j,1}^*\phi_{j,2} & 0
	\end{pmatrix}
	\Phi_j,  \Omega_j \right\rangle  \\
& = \frac{|q_j|^2 \phi_{j,1}^*\phi_{j,2}^*}{|\alpha_j|} 
		\, \Phi_j \cdot (\sigma_1 \Omega_j)
	= 0.
\end{align*}
Therefore, 
\[ \varphi_{\bm{q}}(\bm{e}_j)e^{i\theta_j}
	+\varphi_{\bm{q}}(-\bm{e}_j)e^{-i\theta_j} = 0,
	\quad j=1,\dots,d, \]
which, in conjunction with \eqref{intsum},
gives
}
$\displaystyle 
\int_{[0,2\pi]^d}\hat{\psi_{\lambda}}(\bm{k})
	\frac{d\bm{k}}{(2\pi)^d}=0$.  
Lemma \ref{psiranpi} concludes the proof.
\end{proof}

\section{
Zeros of $\mathfrak{f}$
}

%



In this section we prove Theorem \ref{mainth}. 
We henceforth suppose that Assumption \ref{ratio} holds 
and fix $(\bm{p, q})\in D_l$. 
Recall that 
$\mathfrak{f}: [-1,1]\setminus\sigma(T_0)\to \mathbb{R}$
is defined as
\[ 
	\mathfrak{f}(\lambda)
	=
	\lambda -a_{\Omega}(\bm{p})+
	\langle \varphi_{\bm{q}}, \psi_{\lambda} \rangle_{\mathcal{K}}
\]
and that we set
$\mathbb{T}_- = [-1, -\lambda({\bm q})+a_{\Phi}({\bm p}))$
and $\mathbb{T}_+ = (\lambda({\bm q})+a_{\Phi}({\bm p}), 1]$. 
We need the following lemmas. 

\begin{lemma}\label{monotone}
The function $\mathfrak{f}$ is continuously differentiable 
and monotonically increasing.
\end{lemma}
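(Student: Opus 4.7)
The plan is to differentiate $\mathfrak{f}$ directly using the self-adjointness of $T_0$ and the resolvent identity. The key observation is that $\psi_\lambda = (T_0-\lambda)^{-1}\varphi_{\bm q}$ depends smoothly on the resolvent parameter $\lambda$ as long as $\lambda \notin \sigma(T_0)$, so everything should reduce to a standard resolvent calculation.

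First, I would write $\mathfrak{f}(\lambda) = \lambda - a_\Omega(\bm p) + \langle \varphi_{\bm q}, (T_0-\lambda)^{-1} \varphi_{\bm q}\rangle_{\mathcal K}$ and observe that for $\lambda \in [-1,1]\setminus\sigma(T_0)$ the distance $\mathrm{dist}(\lambda,\sigma(T_0))>0$, so $(T_0-\lambda)^{-1}$ is bounded and depends analytically on $\lambda$ in the resolvent set. The first resolvent identity
\[
(T_0-\mu)^{-1}-(T_0-\lambda)^{-1} = (\mu-\lambda)(T_0-\mu)^{-1}(T_0-\lambda)^{-1}
\]
then gives $\frac{d}{d\lambda}(T_0-\lambda)^{-1} = (T_0-\lambda)^{-2}$ in operator norm on the resolvent set. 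Therefore
\[
\mathfrak{f}'(\lambda) = 1 + \langle \varphi_{\bm q}, (T_0-\lambda)^{-2}\varphi_{\bm q}\rangle_{\mathcal K} = 1 + \|\psi_\lambda\|_{\mathcal K}^2.
\]

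Since $({\bm p},{\bm q})\in D_l$ forces $\varphi_{\bm q}\not\equiv 0$, and $(T_0-\lambda)^{-1}$ is invertible on the resolvent set, we have $\psi_\lambda \neq 0$, so $\mathfrak{f}'(\lambda) \ge 1 > 0$ on $[-1,1]\setminus\sigma(T_0)$. This simultaneously yields strict monotonicity and, since $\lambda \mapsto \|\psi_\lambda\|^2$ is continuous (as a composition of the norm with the continuous operator-valued map $\lambda \mapsto (T_0-\lambda)^{-1}$), continuous differentiability.

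There is essentially no obstacle here: the only mild care needed is to justify interchanging the derivative and the inner product, which follows immediately from operator-norm continuity of $\lambda\mapsto (T_0-\lambda)^{-1}$ on the resolvent set. If preferred, one can replace the abstract argument by the explicit Fourier representation used in the proof of Proposition~\ref{psilambda}, writing
\[
\mathfrak{f}(\lambda) = \lambda - a_\Omega(\bm p) + \int_{[0,2\pi]^d} \frac{|\hat{\varphi}_{\bm q}({\bm k})|^2}{\hat T_0({\bm k})-\lambda}\,\frac{d{\bm k}}{(2\pi)^d},
\]
and differentiating under the integral sign is justified by the uniform bound $|\hat T_0({\bm k})-\lambda|\ge \mathrm{dist}(\lambda,\sigma(T_0))$ on the resolvent set, yielding the same formula $\mathfrak{f}'(\lambda)=1+\|\psi_\lambda\|^2$.
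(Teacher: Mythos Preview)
Your proof is correct and follows essentially the same approach as the paper: the paper's one-line proof simply writes down the Fourier-side formula
\[
\mathfrak{f}'(\lambda)=1+\int_{[0,2\pi]^d}\frac{|\hat\varphi_{\bm q}(\bm k)|^2}{(\hat T_0(\bm k)-\lambda)^2}\,\frac{d\bm k}{(2\pi)^d}>0,
\]
which is exactly your identity $\mathfrak{f}'(\lambda)=1+\|\psi_\lambda\|_{\mathcal K}^2$ expressed via Fourier transform, and which you yourself give as the alternative computation. Your abstract resolvent argument is a mild (and perfectly legitimate) repackaging of the same calculation.
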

\begin{proof}
The lemma is evident from 
$$
\mathfrak{f}^\prime(\lambda) = 
	1+ \int_{[0,2\pi]^d}\frac{|\hat{\varphi_{\bm{q}}}(\bm{k})|^2}
	{(\hat{T_0}(\bm{k})-\lambda)^2}\frac{d\bm{k}}{(2\pi)^d}
	>0.
$$
\end{proof}



\begin{lemma}\label{evaluationf}
	The following hold:
	\begin{enumerate}
	\item If $\lambda\in 
	\mathbb{T}_-
	$, then
	\begin{align}\label{Llambda}
		\lambda-a_{\Omega}(\bm{p})
		+\frac{\|\varphi_{\bm{q}}\|^2}{a_{\Phi}(\bm{p})-\lambda}
		<
		\mathfrak{f}(\lambda)
		<
		\lambda-a_{\Omega}(\bm{p})+\frac{a_{\Phi}(\bm{p})-\lambda}
		{(a_{\Phi}(\bm{p})-\lambda)^2-\lambda({\bm{q}})^2}\|\varphi_{\bm{q}}\|^2.
	\end{align}
	\item If $\lambda\in 
	\mathbb{T}_+
	$, then 
	\begin{align}
		\lambda-a_{\Omega}(\bm{p})
		+\frac{\|\varphi_{\bm{q}}\|^2}{a_{\Phi}(\bm{p})-\lambda}
		>
		\mathfrak{f}(\lambda)
		>
		\lambda-a_{\Omega}(\bm{p})+\frac{a_{\Phi}(\bm{p})-\lambda}
		{(a_{\Phi}(\bm{p})-\lambda)^2-\lambda({\bm{q}})^2}\|\varphi_{\bm{q}}\|^2.
	\end{align}
	\item 
If $a_{\Omega}(\bm{p})\in [-1,1]\setminus \sigma(T_0)$, then
	$\mathfrak{f}(a_{\Omega}(\bm{p}))\neq 0$.
	\end{enumerate}
\end{lemma}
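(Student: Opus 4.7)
The plan is to work in the Fourier picture. Using Plancherel and the formula (\ref{range}) for $\hat{T_0}$, one has
\[
\mathfrak{f}(\lambda) = \lambda - a_\Omega(\bm{p}) + \int_{[0,2\pi]^d} \frac{|\hat{\varphi_{\bm q}}(\bm{k})|^2}{\hat{T_0}(\bm{k}) - \lambda}\,\frac{d\bm{k}}{(2\pi)^d},
\]
so the three parts reduce to estimates on this integral. My key preliminary step is the first-moment identity
\[
\int_{[0,2\pi]^d} \bigl(\hat{T_0}(\bm{k}) - a_\Phi(\bm{p})\bigr)\,|\hat{\varphi_{\bm q}}(\bm{k})|^2 \,\frac{d\bm{k}}{(2\pi)^d} = 0.
\]
I verify this via Plancherel: the integral equals $\sum_{j=1}^d \bigl(\alpha_j \langle \varphi_{\bm q}, L_j \varphi_{\bm q}\rangle + \alpha_j^*\langle \varphi_{\bm q}, L_j^* \varphi_{\bm q}\rangle\bigr)$, and every $\langle \varphi_{\bm q}, L_j^{\pm 1} \varphi_{\bm q}\rangle$ vanishes because $\varphi_{\bm q}$ is supported on $\{\pm \bm{e}_i\}_{i=1}^d$ and no two points of that set differ by a single basis vector. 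Equivalently, writing $d\mu := \|\varphi_{\bm q}\|^{-2}|\hat{\varphi_{\bm q}}|^2\,d\bm{k}/(2\pi)^d$ for the probability measure induced by $\varphi_{\bm q}$, the mean of $\hat{T_0}$ under $\mu$ equals $a_\Phi(\bm{p})$.

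For (1), on $\mathbb{T}_-$ one has $\hat{T_0}(\bm{k}) - \lambda > 0$ and the map $x\mapsto (x-\lambda)^{-1}$ is strictly convex on the range of $\hat{T_0}$. Jensen's inequality applied to $\mu$ immediately gives the lower bound $\|\varphi_{\bm q}\|^2/(a_\Phi(\bm{p})-\lambda)$. For the upper bound I majorize $(x-\lambda)^{-1}$ by its secant on the interval $[a_\Phi(\bm{p})-\lambda(\bm{q}),\,a_\Phi(\bm{p})+\lambda(\bm{q})]$, which contains the range of $\hat{T_0}$; integrating against $\mu$ kills the linear term thanks to the moment identity and produces precisely the stated right-hand side. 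Both bounds are strict because $|\hat{\varphi_{\bm q}}|^2$ is a nonzero trigonometric polynomial, so $\mu$ is absolutely continuous with respect to Lebesgue measure and assigns zero mass to the discrete level sets on which $\hat{T_0}$ is constant or attains its extreme values $a_\Phi(\bm{p})\pm\lambda(\bm{q})$. Part (2) is the same argument on $\mathbb{T}_+$: there $\hat{T_0}-\lambda < 0$, $(x-\lambda)^{-1}$ is strictly concave, and every inequality flips.

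For (3), $\mathfrak{f}(a_\Omega(\bm{p})) = \langle \varphi_{\bm q}, (T_0 - a_\Omega(\bm{p}))^{-1} \varphi_{\bm q}\rangle$; since $a_\Omega(\bm{p}) \in [-1,1]\setminus\sigma(T_0)$, the operator $T_0 - a_\Omega(\bm{p})$ is either strictly positive or strictly negative, and combined with $\varphi_{\bm q}\neq 0$ (guaranteed by $(\bm{p},\bm{q})\in D_l$) the inner product has definite nonzero sign. The main technical obstacle I expect is the first-moment identity together with the bookkeeping needed to confirm strictness of the Jensen and secant bounds; once those are in hand, the remaining manipulations are routine.
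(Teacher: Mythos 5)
Your proof is correct and follows essentially the same route as the paper's: both hinge on the first-moment identity $\langle\varphi_{\bm{q}}, T_0\varphi_{\bm{q}}\rangle = a_{\Phi}(\bm{p})\|\varphi_{\bm{q}}\|^2$ and then apply Jensen's inequality for the lower bound and the secant (chord) majorant of the convex function $x\mapsto (x-\lambda)^{-1}$ on $\sigma(T_0)$ for the upper bound, with all inequalities reversed on $\mathbb{T}_+$. The only departures are presentational: you verify the moment identity explicitly from the support of $\varphi_{\bm{q}}$ (the paper asserts it without proof), you justify strictness of the Jensen and chord bounds (the paper does not), and for part (3) you argue directly from the sign-definiteness of $(T_0 - a_{\Omega}(\bm{p}))^{-1}$ rather than reading off the sign of $\mathfrak{f}(a_{\Omega}(\bm{p}))$ from the bounds of parts (1)--(2).
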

\begin{proof}
Let $\lambda\in
\mathbb{T}_-
$. 
Since $\varphi_{\bm{q}} \neq 0$, 
$\mathfrak{f}(\lambda)$ can be written as
$$
	\mathfrak{f}(\lambda) = \lambda -a_{\Omega}(\bm{p}) + 
	\|\varphi_{\bm{q}}\|^2
	\int_{\sigma(T_0)}g_{\lambda}(x)d
	\langle \varphi_{\bm{q}}, E_{T_0}(x)\varphi_{\bm{q}} 
	\rangle /\|\varphi_{\bm{q}}\|^2, 
$$
where $E_{T_0}(\cdot)$ is the spectral measure of $T_0$ and 
$g_{\lambda}(x)=\frac{1}{x-\lambda}$.
Note that $\langle \varphi_{\bm{q}}, E_{T_0}(\cdot)\varphi_{\bm{q}} 
\rangle /\|\varphi_{\bm{q}}\|^2$ is a probability measure on 
$\sigma(T_0)=[a_{\Phi}(\bm{p})-\lambda_{\bm{q}}, a_{\Phi}(\bm{p})+\lambda_{\bm{q}}]$.
By Jensen's inequality, we have 
\begin{align*}
	\int_{\sigma(T_0)}g_{\lambda}(x)d
	\langle \varphi_{\bm{q}}, E_{T_0}(x)\varphi_{\bm{q}} 
	\rangle /\|\varphi_{\bm{q}}\|^2
	&>
	g_{\lambda}\left(
	\int_{\sigma(T_0)}xd\langle \varphi_{\bm{q}}, E_{T_0}(x)\varphi_{\bm{q}} 
	\rangle /\|\varphi_{\bm{q}}\|^2
	\right)
	\\
	&= 
	g_{\lambda}
	\left(\langle 
	\varphi_{\bm{q}}, T_0\varphi_{\bm{q}}\rangle /\|\varphi_{\bm{q}}\|^2
	\right).
	\end{align*}
Because
$\langle \varphi_{\bm{q}}, T_0\varphi_{\bm{q}}\rangle 
/\|\varphi_{\bm{q}}\|^2= a_{\Phi}(\bm{p})$,
we have, 
\begin{align}\label{leftf}
	\mathfrak{f}(\lambda )>\lambda -a_{\Omega}(\bm{p}) + 
	\|\varphi_{\bm{q}}\|^2g_{\lambda}(a_{\Phi}(\bm{p}))
	=
	\lambda-a_{\Omega}(\bm{p})+\frac{\|\varphi_{\bm{q}}\|^2}{a_{\Phi}(\bm{p})-\lambda}.
\end{align}
Let $u : [a_{\Phi}(\bm{p})-\lambda({\bm{q}}), a_{\Phi}(\bm{p})+\lambda({\bm{q}})]
\to \mathbb{R}$ be a linear function such that
$u(a_{\Phi}(\bm{p})-\lambda({\bm{q}}))=g_{\lambda}(a_{\Phi}(\bm{p})-\lambda({\bm{q}}))$ and 
$u(a_{\Phi}(\bm{p})+\lambda({\bm{q}}))=g_{\lambda}(a_{\Phi}(\bm{p})+\lambda({\bm{q}}))$, 
i.e., 
$$
	u(x)=\frac{-x+2a_{\Phi}(\bm{p})-\lambda}{(a_{\Phi}(\bm{p})-\lambda)^2
	-\lambda({\bm{q}})^2}.
$$
By the convexity of $g_{\lambda}$, we have
\begin{align*}
	\int_{\sigma(T_0)}g_{\lambda}(x)d
	\langle \varphi_{\bm{q}}, E_{T_0}(x)\varphi_{\bm{q}} 
	\rangle /\|\varphi_{\bm{q}}\|^2
	&<
	\int_{\sigma(T_0)}u(x)d
	\langle \varphi_{\bm{q}}, E_{T_0}(x)\varphi_{\bm{q}} 
	\rangle /\|\varphi_{\bm{q}}\|^2 \\
	&=
	\frac{-\langle \varphi_{\bm{q}}, T_0\varphi_{\bm{q}} 
	\rangle /\|\varphi_{\bm{q}}\|^2 +2a_{\Phi}(\bm{p})
	-\lambda}{(a_{\Phi}(\bm{p})-\lambda)^2-\lambda({\bm{q}})^2}\\
	&=
	\frac{a_{\Phi}(\bm{p})-\lambda}{(a_{\Phi}(\bm{p})-\lambda)^2
	-\lambda({\bm{q}})^2}.
\end{align*}
Hence, 
\begin{align}\label{rightf}
	\mathfrak{f}(\lambda)
	<
	\lambda-a_{\Omega}(\bm{p})+\frac{a_{\Phi}(\bm{p})-\lambda}
	{(a_{\Phi}(\bm{p})-\lambda)^2-\lambda_{\bm{q}}^2}\|\varphi_{\bm{q}}\|^2.
\end{align}
(\ref{leftf}) and (\ref{rightf}) imply (\ref{Llambda}).
Hence (1) is proved. 
The same proof works for (2). 

We prove (3). 	
If $a_{\Omega}(\bm{p})
\in \mathbb{T}_-
$,
then, $a_\Omega({\bm p}) < a_\Phi({\bm p})$. 
By (\ref{leftf}), we have $\mathfrak{f}(a_{\Omega}(\bm{p}))>0$. 
Similarly,  if $a_{\Omega}(\bm{p})\in 
\mathbb{T}_+
$, then
$\mathfrak{f}(a_{\Omega}(\bm{p}))<0$.
\end{proof}
	
Because by Lemma \ref{monotone} 
$\mathfrak{f}$ is monotonically increasing , 
$\mathfrak{f}(\lambda)$ has a zero in $\mathbb{T}_-$ 
{if and only if}
\begin{align}
f(-1)\leq 0 
\quad\text{and}\quad
		\lim_{\lambda \,\uparrow \, 
		a_{\Phi}(\bm{p})-\lambda({\bm{q}})}
		\mathfrak{f}(\lambda) >0.
	\tag{L}
\end{align}
Similarly,
$\mathfrak{f}(\lambda)$
has a zero in $\mathbb{T}_+$ 
{if and only if}
\begin{align}
f(1)\geq 0
\quad\text{and}\quad
		\lim_{\lambda \, \downarrow \, 
		a_{\Phi}(\bm{p})+\lambda({\bm{q}})}
		\mathfrak{f}(\lambda)<0.
	\tag{R}
\end{align}
\begin{proof}[Proof of Theorem \ref{mainth}]
Let $\lambda \in \mathbb{T}_-$. 
By Lemma \ref{evaluationf}, 
(L) holds if 
\begin{align}\label{twocon}
	\begin{cases}\displaystyle
		-1-a_{\Omega}(\bm{p})+
		\frac{a_{\Phi}(\bm{p})+1}{(a_{\Phi}(\bm{p})+1)^2-\lambda({\bm{q}})^2}
		\|\varphi_{\bm{q}}\|^2\leq 0, \\
		\displaystyle
		0 < 
		a_{\Phi}(\bm{p})-\lambda({\bm{q}})-a_{\Omega}(\bm{p})+
		\frac{\|\varphi_{\bm{q}}\|^2}{\lambda({\bm{q}})},
	\end{cases}
\end{align}
which is equivalent to (\ref{up}).
Thus,
(\ref{up}) concludes (L). 
This proves (1) of Theorem \ref{mainth}. 
The same proof works for (2). 
\end{proof}


%
%


\section*{Acknowledgments}
This work was supported by JSPS Grant in Aid for Young Scientists (B) 26800054.

\end{document}